\documentclass[a4paper,UKenglish,cleveref, autoref]{lipics-v2019}
%This is a template for producing LIPIcs articles. 
%See lipics-manual.pdf for further information.
%for A4 paper format use option "a4paper", for US-letter use option "letterpaper"
%for british hyphenation rules use option "UKenglish", for american hyphenation rules use option "USenglish"
% for section-numbered lemmas etc., use "numberwithinsect"
 
\usepackage{microtype}%if unwanted, comment out or use option "draft"
\usepackage{multirow}
\usepackage{float}
\usepackage{helvet}
\usepackage{color}
\usepackage{algorithm}
\usepackage{fdsymbol}
\usepackage{booktabs}
\usepackage[noend]{algpseudocode}
\usepackage{caption}

\graphicspath{{./figures/}}%helpful if your graphic files are in another directory

\bibliographystyle{plainurl}% the recommended bibstyle

\newcommand{\range}[1]{\overset{\leftrightblackspoon}{#1}}
%\theoremstyle{definition}

% Author macros::begin %%%%%%%%%%%%%%%%%%%%%%%%%%%%%%%%%%%%%%%%%%%%%%%%
\title{Simulating the DNA Overlap Graph in Succinct Space}
\titlerunning{Simulating the DNA overlap graph in succinct space} %optional, in case that the title is too long; the running title should fit into the top page column

\author{Diego Díaz-Domínguez\footnote{Corresponding author}}{CeBiB — Center for Biotechnology and Bioengineering, \and Department of Computer Science, University of Chile, Chile}{diediaz@dcc.uchile.cl}{https://orcid.org/0000-0002-9071-0254}{}

\author{Travis Gagie}{School of Computer Science and Telecommunications, Diego Portales
University, Chile \and CeBiB — Center for Biotechnology and Bioengineering}{travis.gagie@gmail.com}{https://orcid.org/0000-0003-3689-327X}{}

\author{Gonzalo Navarro}{CeBiB — Center for Biotechnology and Bioengineering, \and Department of Computer Science, University of Chile, Chile}{gnavarro@dcc.uchile.cl}{https://orcid.org/0000-0002-2286-741X}{}

\authorrunning{D. Díaz-Domínguez et al.} %mandatory. First: Use abbreviated first/middle names. Second (only in severe cases): Use first author plus 'et. al.'

\Copyright{Diego Díaz-Domínguez, Travis Gagie and Gonzalo Navarro}%mandatory, please use full first names. LIPIcs license is "CC-BY";  http://creativecommons.org/licenses/by/3.0/

\ccsdesc[100]{Applied computing~Computational biology}
\ccsdesc[100]{Information systems~Data compression}%TODO mandatory: Please choose ACM 2012 classifications from https://dl.acm.org/ccs/ccs_flat.cfm 

\keywords{Overlap graph, de Bruijn graph, DNA sequencing, Succinct ordinal trees}% mandatory: Please provide 1-5 keywords

\category{}%optional, e.g. invited paper

\relatedversion{}%optional, e.g. full version hosted on arXiv, HAL, or other respository/website
%\relatedversion{A full version of the paper is available at \url{...}.}

\supplement{}%optional, e.g. related research data, source code, ... hosted on a repository like zenodo, figshare, GitHub, ...

\funding{Partially supported by Basal Funds FB0001, Conicyt, Chile; by a Conicyt Ph.D. Scholarship; by Fondecyt Grants 1-171058 and 1-170048, Chile; and by the European Union’s Horizon 2020 research and innovation programme under the Marie Sklodowska-Curie [grant agreement No 690941]}%optional, to capture a funding statement, which applies to all authors. Please enter author specific funding statements as fifth argument of the \author macro.

\acknowledgements{We thank the reviewers for their helpful comments}%optional

\nolinenumbers %uncomment to disable line numbering

%\hideLIPIcs  %uncomment to remove references to LIPIcs series (logo, DOI, ...), e.g. when preparing a pre-final version to be uploaded to arXiv or another public repository
% Author macros::end %%%%%%%%%%%%%%%%%%%%%%%%%%%%%%%%%%%%%%%%%%%%%%%%%
%Editor-only macros:: begin (do not touch as author)%%%%%%%%%%%%%%%%%%%%%%%%%%%%%%%%%%
\EventEditors{Nadia Pisanti and Solon P. Pissis}
\EventNoEds{2}
\EventLongTitle{30th Annual Symposium on Combinatorial Pattern Matching (CPM 2019)}
\EventShortTitle{CPM 2019}
\EventAcronym{CPM}
\EventYear{2019}
\EventDate{June 18--20, 2019}
\EventLocation{Pisa, Italy}
\EventLogo{}
\SeriesVolume{128}
\ArticleNo{27}
% Editor-only macros::end %%%%%%%%%%%%%%%%%%%%%%%%%%%%%%%%%%%%%%%%%%%%%%%

\begin{document}

\maketitle

\begin{abstract}
Converting a set of sequencing reads into a lossless compact data structure that encodes all the relevant biological information is a major challenge. The classical approaches are to build the string graph or the \emph{de Bruijn} graph (dBG) of some order $k$. Each has advantages over the other depending on the application. Still, the ideal setting would be to have an index of the reads that is easy to build and can be adapted to any type of biological analysis. In this paper we propose \textit{rBOSS}, a new data structure based on the Burrows-Wheeler Transform (BWT), which gets close to that ideal. Our \textit{rBOSS} simultaneously encodes all the dBGs of a set of sequencing reads up to some order $k$, and for any dBG node $v$, it can compute in $\mathcal{O}(k)$ time all the other nodes whose labels have an overlap of at least $m$ characters with the label of $v$, with $m$ being a parameter. If we choose the parameter $k$ equal to the size of the reads (assuming that all have equal length), then we can simulate the overlap graph of the read set. Instead of storing the edges of this graph explicitly, \emph{rBOSS} computes them on the fly as we traverse the graph. As most BWT-based structures, \textit{rBOSS} is unidirectional, meaning that we can retrieve only the suffix overlaps of the nodes. However, we exploit the property of the DNA reverse complements to simulate bi-directionality. We implemented a genome assembler on top of \textit{rBOSS} to demonstrate its usefulness. The experimental results show that, using $k=100$, our \textit{rBOSS}-based assembler can process $\sim$500K reads of 150 characters long each (a \texttt{FASTQ} file of 185 MB) in less than 15 minutes and using 110 MB in total. It produces contigs of mean sizes over 10,000, which is twice the size obtained by using a pure de Bruijn graph of fixed length $k$.
\end{abstract}

\section{Introduction}\label{sec:introduction}

Obtaining and extracting the relevant information from a collection of DNA sequencing \emph{reads}\footnote{A string that represents the inferred sequence of base pairs in a segment of a DNA molecule.}, for assembly and other analysis purposes, usually requires a lot of time and space. The techniques for compressed indexing developed in recent years (see~\cite{navarro2007compressed} for a full review) have significantly contributed to reduce the computational costs. There is still no technique, however, that can preprocess the reads and represent all the relevant information in succinct space so that it can be used effectively.

The classical plain, and lossless, data structure to analyze reads is the \emph{overlap graph}. In this model, each node represents a particular read, and two nodes $v$ and $v'$ are connected by an edge with weight $o \ge m$ if $o$ is the maximum length of a suffix of $v$ that matches a prefix of $v'$, where $m$ is a parameter to filter out spurious overlaps. Computing the overlap graph from a set of reads is not difficult: it can be built from the suffix tree of the set or even from its \emph{Burrows-Wheeler Transform (BWT)}~\cite{BW94,simpson2010efficient} or the Longest Common Prefix array (LCP)~\cite{bonizzoni2014constructing}. Representing the graph, however, is expensive: a quadratic number of edges may have to be stored. A popular solution is to perform \emph{structural compression} over the graph by removing the transitive edges. The resulting graph is usually called the \emph{string graph} \cite{myers2005fragment,simpson2012efficient} or the \emph{irreductible overlap graph} \cite{makinen2015genome}. This approach, however, limits the applications of the data strucutre, because it removes information from the graph that can be useful.

Historically, string graphs have been used mainly in the context of \emph{genome assembly} \cite{denisov2008consensus,myers2005fragment,zimin2013masurca}, but as sequencing datasets have grown over the years, they have been discarded in favor of other lossy, but more succinct, representations. The most famous of these representations is the \emph{de Bruijn} graph (dBG). This data structure encodes the relationship between all the substrings of length $k$ in the set. A dBG is easy to construct and it can be represented succinctly~\cite{BOSS12}. Besides,  it encodes the context of the substrings of length $k-1$ in its topology. Thus, for instance, if a substring of length $k-1$ appears in several contexts of the set, then its dBG node will have several edges. As for the string graph, the first application of the dBG was the assembly of genomes~\cite{bankevich2012spades,li2015megahit,Peng2010,zerbino2008velvet}, but through the years its use has been extended to other kind of analyses~\cite{bray2016near,iqbal2012novo}.

The disadvantage of dBGs, however, is that they are lossy, because the information we can retrieve from them is limited by $k$. A way to overcome some of the restrictions imposed by $k$ is to add variable-order functionality to the dBG, that is, to encode several dBGs with different values for $k$ at the same time. The contexts of the graph can then be shortened or lengthened on demand, depending on the need to have more or less edges from a node. Some succinct dBG representations supporting variable-order functionality up to some maximum order $k$ have been proposed~\cite{BBGPS15}, but they increase the space requirements by a $\log k$ factor. Besides, even using variable-order functionality, the data structure remains lossy, because the order of the graph can be lengthened only up to $k$. By choosing $k$ equal to the size of the reads, the variable-order dBG becomes lossless, and equivalent to the overlap graph, but the $\log k$ factor becomes significant for the typical read sizes.  

Almost every analysis over DNA sequencing data can be reduced to looking for suffix-prefix overlaps between the reads, and in this regard the dBG has adapted well to many bioinformatic applications because it is a lightweight (lossy) representation of the overlaps. Still, 
searching for biological signals in a dBG requires the detection of complex graph substructures such as bubbles, super bubbles, tips, and so on, and those can be expensive to find. The overlap graph, on the other hand, is a much simpler model. It can be adapted to different applications other than assembly in the same way as the dBG, but it has the advantage of being lossless and not requiring too much preprocessing after its construction. The problem, as stated before, is the space to encode the edges of the graph. A better approach would be to have a compact data structure that can quickly compute the overlaps (i.e., the edges) on the fly instead of storing them explicitly. Such a solution would require a moderate preprocessing of the read set, would retain all the information, and would use a reasonable  amount of space.

\subparagraph{Our contribution. \rm We address the problem of succinctly representing and analyzing a collection $\mathcal{R}$ of sequencing reads. To this end, we define a new compact data structure we call {\em rBOSS}. It is an intermediate structure between a dBG and an overlap graph, because it can compute the context of the sequences in the same way a dBG does, but it can also compute on the fly the overlaps between different substrings of $\mathcal{R}$. The \textit{rBOSS} index is based on \textit{BOSS} \cite{BOSS12}, a BWT-based representation of dBGs, which is augmented with a tree we call the {\em overlap tree}. This tree increases the size of the data structure by $4n + o(n)$ bits, where $n$ is the number of nodes in the dBG encoded by \textit{BOSS}. By choosing $k$ equal to the length of the reads (which we assume to have all the same length), we can simulate in compressed space an overlap graph whose edges have a weight $o \ge m$, where $m$ is given as a parameter. The simulation of the graph builds on the basic primitives {\tt nextcontained} and {\tt buildL}. Our overlap tree reduces their time complexity from $\mathcal{O}(k^2)$ to $\mathcal{O}(1)$ and $\mathcal{O}(k)$, respectively.\\}

In addition to \textit{rBOSS}, we also formalize the idea of weighting the overlap graph edges according to transitive connections, and explain how this new weighting scheme can be used to solve biological problems other than assembly. To our knowledge, this is the first time a measure of this kind is proposed for overlap graphs. Finally, we demonstrate the usefulness of \textit{rBOSS} by implementing a genome assembler on top of it. Our experimental results show that, by using $k=100$, the assembler can process $\sim$500K reads of 150 characters long each in less than 15 minutes and using 110 MB in total. It produces contigs of mean sizes over 10,000, which is twice the size obtained by using a pure dBG of fixed length $k$.

\section{Preliminaries}\label{prems}

\subparagraph{DNA strings. \rm A DNA sequence $R$ is a string over the alphabet $\Sigma=\{\texttt{a},\texttt{c},\texttt{g},\texttt{t}\}$ (which we map to $[2..\sigma]$), where every symbol represents a particular nucleotide in a DNA molecule. The DNA \emph{complement} is a permutation $\pi[2..\sigma]$ that reorders the symbols in $\Sigma$ exchanging \texttt{a} with \texttt{t} and \texttt{c} with \texttt{g}. The \emph{reverse complement} of $R$, denoted $R^{rc}$, is a string transformation that reverses $R$ and then replaces every symbol $R[i]$ by its complement $\pi(R[i])$. For technical convenience we add to $\Sigma$ the so-called \emph{dummy} symbol \$, which is always mapped to 1.} 

\subparagraph{De Bruijn graphs. \rm A de Bruijn Graph (dBG)~\cite{de1946combinatorial} of order $k$ of a set of strings $\mathcal{R} = \{ R_1, R_2, \ldots, R_r\}$, $DBG_{\mathcal{R},k}$, is a labeled directed graph $G=(V, E)$ where every node $v \in V$ is labeled by a distinct substring of $\mathcal{R}$ of length $k-1$, and every edge $(v, u) \in E$ represents the substring $S[1..k]$ of $\mathcal{R}$ such that $v$ is labeled by the prefix $S[1..k-1]$, $u$ is labeled by the suffix $S[2..k]$, and the label of $(v,u)$ is the symbol $S[k]$. We identify a node with its label.\\}

A \emph{variable-order} dBG (vo-dBG)~\cite{BBGPS15}, $voDBG_{\mathcal{R},k}$, is formed by the union of all the graphs $DBG_{\mathcal{R},k'}$, with $1\leq k' \leq k$. Each $DBG_{\mathcal{R},k}$ represents a \emph{context} of $voDBG_{\mathcal{R},k}$. In addition to the (directed) edges of each $DBG_{\mathcal{R},k}$, two nodes $v \in DBG_{\mathcal{R},k'}$ and $v' \in DBG_{\mathcal{R},k''}$, with $k' > k''$, are connected by an undirected edge $(v, v')$ if $v'$ is a suffix of $v$. Following the edge $(v, v')$ or $(v',v)$ is called a \emph{change of order}. We then identify node order with length.

\subparagraph{BOSS representation for de Bruijn graphs. \rm \textit{BOSS}~\cite{BOSS12} is a succinct data structure, similar to the \textit{FM-index}~\cite{ferragina2005indexing}, for encoding dBGs. In \textit{BOSS}, the nodes are represented as rows in a matrix of $k-1$ columns, and are sorted in reverse lexicographical order (i.e., reading the labels right to left). All the edge (one-symbol) labels of the graph are stored in a unique sequence $E$ sorted by the \textit{BOSS} order of the source nodes, so the symbols of the outgoing edges of each node fall in a contiguous range. A bitmap $B$ of size $e=|E|$ marks the last outgoing symbol in $E$ of every dBG node. Finally, an array $C[1..\sigma]$ stores in $C[i]$ the number of node labels that end with a symbol lexicographically smaller than $i$.\\}\label{boss}

Prefixes in $\mathcal{R}$ of size $d<k$ are artificially represented in  \textit{BOSS} as strings of length $k$ padded at the left with $k-d$ symbols \$. Equivalently, suffixes of size $d<k$ are represented as strings of length $k$ padded at the right with $k-d$ symbols \$. For this work, however, suffixes of size $d<k$ are not necessary. Strings formed only by symbols \$ are also called dummy.

The complete index is thus composed of the vectors $E$, $C$, and $B$. It can be stored in $e(\mathcal{H}_{0}(E)+\mathcal{H}_{0}(B))(1+o(1)) + \mathcal{O}(\sigma \log n)$ bits, where $\mathcal{H}_0$ is the zero-order empirical entropy~\cite[Sec~2.3]{navarro2016compact}. This space is reached with a Huffman-shaped Wavelet Tree \cite{MN05} for $E$, a compressed bitmap~\cite{raman2007} for $B$ (as it is usually very dense), and a plain array for $C$. 

 \textit{BOSS} supports several navigational queries, most of them within $O(\log \sigma)$ time or less. The most relevant ones for us are:

\begin{itemize}
    \item \texttt{outdegree$(v)$}: number of outgoing edges of $v$.
    \item \texttt{forward$(v, a)$}: node reached by following an edge from $v$ labeled with symbol $a$.
    \item \texttt{indegree$(v)$}: number of incoming edges of $v$.
    \item \texttt{backward$(v)$}: list of the nodes with an outgoing edge to $v$.
\end{itemize}

Boucher et al.~\cite{BBGPS15} noticed that by considering just the last $k'-1$ columns in the  \textit{BOSS} matrix, with $k'\leq k$, the resulting nodes are the same as those in the dBG of order $k'$. To allow changing the order of the dBG in  \textit{BOSS}, they augmented the data structure with the \emph{longest common suffix} ($LCS$) array. The $LCS$ array stores, for every node of order $k$, the size of the longest suffix shared with its predecessor node in the  \textit{BOSS} matrix. They called this new index the \emph{variable-order}  \textit{BOSS} (\textit{VO-BOSS}), which supports the following additional operations:

\begin{itemize}
    \item \texttt{shorter$([i,j], k')$}: range of the nodes suffixed by the last $k'$ characters of $v$. 
    \item \texttt{longer$([i,j], k')$}: list of the nodes of length $k'$ that end with $v$. 
    \item \texttt{maxlen$([i,j], a)$}: a node in the index suffixed by $v$, and that has an outgoing edge labeled $a$.
\end{itemize}

Where $[i,j]$ is the range of rows in the \textit{BOSS} matrix suffixed by the label of a vo-dBG node $v$. By using a Wavelet Tree~\cite{grossi2003}, the $LCS$ can be stored in $n\log k + o(n\log k)$ bits, the function \texttt{shorter($[i,j],k'$)} can be answered in $\mathcal{O}(\log k)$ time and the function \texttt{longer($[i,j],k'$)} in $\mathcal{O}(|U|\log k)$ time, where $U$ is the set of rows of the \textit{BOSS} matrix contained within the range $[i-1..j]$, and whose $LCS$ values are below $k'$. The function \texttt{maxlen($[i,j],a$)} is implemented using the arrays $B$ and $E$ from $BOSS$, and hence it is answered in $\mathcal{O}(\log \sigma)$ time.

\subparagraph{Succinct representation of ordinal trees. \rm
An ordinal tree $T$ with $n$ nodes can be stored succinctly as a sequence of \emph{balanced parentheses} ($BP$) encoded as a bit vector $B[1..2n]$. Every node $v$ in $T$ is represented by a pair of parentheses \texttt{(..)} that contain the encoding of the subtree rooted at $v$. Every node of $T$ can be identified by the position in $B$ of its open parenthesis. Many navigational operations over $T$ can be simulated over $B$ in constant time, by using a structure that requires $o(n)$ bits on top of $B$~\cite{NS14}. }\label{ordtrees} 

\section{rBOSS}\label{rboss_sec}

\subparagraph{Basic definitions. \rm Let $\mathcal{R}=\{R_1...R_r\}$ be a collection of $r$ reads (strings) of length $z$ and let $\mathcal{R}^{rc}$ be a collection, also of $r$ strings, every $R_{i}' \in \mathcal{R}^{rc}$ being the reverse complement of one $R_j \in \mathcal{R}$. Aditionally, we define the set $\mathcal{R}^* = \mathcal{R} \cup \mathcal{R}^{rc}$. Let us denote $G=DBG_{\mathcal{R}^*,k}$ and $G'=voDBG_{\mathcal{R}^{*},k}$. A \emph{traversal} $P$ over $G$, or $G'$, is a sequence $(v_0,e_0,v_1...v_{t-1},e_t,v_t)$ where $v_0,v_1,...v_{t-1},v_t$ are nodes and $e_1..e_t$ are edges, $e_i$ connecting $v_{i-1}$ with $v_i$. $P$ will be a \emph{path} if all the nodes are different, except possibly the first and the last. In such case, $P$ is said to be a \emph{cycle}. $P$ is \emph{unary} if the nodes ($v_0..v_{t-1}$) have outdegree 1 and the nodes ($v_1..v_t$) have indegree 1. $P$ will be a \emph{right} traversal over $G$ or $G'$ if all its edges $e_i$ are directed from $v_{i-1}$ to $v_i$, and a \emph{left} traversal if all its edges $e_i$ are directed from $v_i$ to $v_{i-1}$. The string formed by the concatenation of the edge symbols of $P$ is referred to as its label. $P$ will be \emph{safe} if it is a path or a cycle and its label appears in $\mathcal{R}^*$ as a substring of some read or if it can be generated by overlapping two or more ${R_i..R_j} \in \mathcal{R}^{*}$ in tandem and then taking the string that results from the union of those reads. The overlaps between the reads have to be of minimum size $m$.\\}\label{defs}

Let $BOSS(G)$ and $VO$-$BOSS(G')$ be the \textit{BOSS} and \textit{VO-BOSS} indexes, respectively, for the graphs $G$ and $G'$. In both cases, the matrix with the $(k-1)$-length node labels is referred to as $M_{\mathcal{R}^*,k}$, or just $M$ when the context is clear. In $VO$-$BOSS(G')$, the range of rows in $M$ suffixed by $v$ is denoted $\range{v}$.

Rows of $M$ representing substrings of size $k-1$ in $\mathcal{R}^*$ are called \emph{solid} nodes and rows representing artificial $(k-1)$-length strings padded with dummy symbols from the left, and that represent prefixes in $\mathcal{R}^*$, are called \emph{linker} nodes. For a linker node $v$, the function  \texttt{llabel($v$)} returns the non-\$ suffix of $v$. A solid node that appears as a suffix in $\mathcal{R}$ is called an \emph{s-node} and a solid node that appears as a prefix in $\mathcal{R}^*$ is called a \emph{p-node}. A linker node $v$ is said to be \emph{contained} within another node $v'$ (solid or linker) if \texttt{llabel($v$)} is a suffix of $v'$.

An overlap of size $o$ between two solid nodes $v$ and $v'$, denoted $v\oplus^{o}v'$, occurs when the $o$-length suffix of $v$ is equal to the $o$-length prefix of $v'$. Relative to $v$, $v\oplus^{o}v'$ is a \emph{forward} overlap and $v'\oplus^{o}v$ is a \emph{backward} overlap. An overlap $v\oplus^{o}v'$ is \emph{valid} if (i) $m\leq o<k-2$, $m$ being some parameter, and $v'$ is a p-node, or (ii) $o=k-2$ and $v'$ is a solid node of any kind. Notice that case (ii) is equivalent to the definition of two dBG nodes connected by an edge. The overlap $v\oplus^{o}v'$ is considered \emph{transitive} if there is another solid node $v''$ with valid overlaps $v\oplus^{o'}v''$ and $v'\oplus^{o''}v''$, with $o' > o''$. If there is only one $v''$, then $v\oplus^{o'}v'$ is transitive and \emph{unique}. If such $v''$ does not exist, then $v\oplus^{o}v'$ is \emph{irreductible}. The string formed by the union of the solid nodes $v$ and $v'$ is denoted \texttt{label}$(v\oplus^{o}v')$. 

\subparagraph{Link between variable-order and overlaps. \rm Overlaps between reads in $\mathcal{R}^*$ can be computed using $VO$-$BOSS(G')$ as follows: extend a unary path using solid nodes as much as possible, and if a solid node $v$ without outgoing edges is reached, then decrease its order with \texttt{shorter} to retrieve the vo-dBG nodes that represent both a suffix of $v$ and a prefix of some read in $\mathcal{R}$. From these nodes, retrieve the overlapping solid nodes of $v$ by using \texttt{forward} and continue the graph right traversal from one of them.\\}

In  \textit{VO-BOSS}, however, \texttt{shorter} does not ensure that the label of the output node appears as a prefix in $\mathcal{R}^*$. The next lemma precises the condition that must hold to ensure this.

\begin{lemma}\label{l1}
In  \textit{VO-BOSS}, applying the operation {\tt shorter} to a node $v$ of order $k'\leq k$ will return a node $u$ of order $k''<k'$ that encodes a forward overlap for $v$ iff $\range{u}[1]$ is a linker node contained by $v$.
\end{lemma}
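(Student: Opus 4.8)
The plan is to unpack the definitions on both sides of the ``iff'' and show each direction by examining the structure of the \textit{BOSS} matrix $M$. Recall that the rows of $M$ are the $(k-1)$-length node labels sorted in reverse lexicographic order (read right-to-left), so that all rows sharing a common suffix form a contiguous range; this is exactly why $\range{\cdot}$ is well defined. The operation \texttt{shorter} applied to $v$ (of order $k'$) produces the node $u$ whose label is the $k''$-length suffix of $v$, and its range $\range{u}$ is the maximal contiguous block of rows of $M$ all ending with that suffix. The key observation I would make first is that $u$ ``encodes a forward overlap for $v$'' precisely when \texttt{llabel} of some linker node in $\range{u}$ equals the label of $u$ --- i.e.\ when the $k''$-length suffix of $v$ is itself a prefix of some read in $\mathcal{R}^*$, which by the definition of linker/p-node given just above Lemma~\ref{l1} is what a valid (forward) overlap of size $o=k''$ requires.

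First I would prove the ``if'' direction. Suppose $\range{u}[1]$ is a linker node contained by $v$. By definition of ``contained,'' \texttt{llabel}$(\range{u}[1])$ is a suffix of $v$; and since $\range{u}[1]$ lies in the range of rows ending with the label of $u$, its non-\$ suffix has length at least $k''$. But a linker node's label has the form $\$^{k-1-d}\,w$ where $w$ is a prefix of some read of length $d \le k''$, so \texttt{llabel} has length exactly $d$ and the label of $\range{u}[1]$ is $\$^{k-1-k''}\cdot(\text{label of }u)$ only if $d=k''$; in general \texttt{llabel} is the $d$-suffix of the $k''$-suffix of $v$. The point is that $w=\texttt{llabel}(\range{u}[1])$ is simultaneously a suffix of $v$ and a prefix of a read in $\mathcal{R}^*$, so it witnesses a forward overlap of $v$ with a p-node, and this overlap is exactly the one reachable from $u$ via \texttt{forward}. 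Hence $u$ encodes a forward overlap for $v$.

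Next the ``only if'' direction, which I expect to be the more delicate one and the main obstacle. Assume $u$ encodes a forward overlap for $v$. Then the $k''$-length suffix $s$ of $v$ is a prefix of some read $R \in \mathcal{R}^*$, so a prefix of $R$ of length between $k''$ and $k-1$ is represented in $M$ by a node whose label ends with $s$; that node therefore lies in $\range{u}$. If this prefix has length strictly less than $k-1$ it is, by definition, a linker node. The real work is to argue that such a linker node must be $\range{u}[1]$, the \emph{first} row of the range, rather than some interior row. This follows from the reverse-lexicographic ordering together with the convention that \$ maps to $1$: among all rows of $M$ ending with $s$, those whose symbol in column $k-1-k''$ (the position just left of the $s$-block, reading right to left) is \$ sort \emph{before} all rows whose symbol there is in $[2..\sigma]$; and a row ending with $s$ whose next symbol to the left is \$ is either a linker node (padded prefix) or a dummy node. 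I would then need to rule out the dummy-node case (or fold it appropriately into the ``contained'' terminology) using the fact that $s$ genuinely appears as a read prefix, so the corresponding padded-prefix row exists and is the lexicographically smallest row ending with $s$ --- hence it is $\range{u}[1]$, and by construction \texttt{llabel} of it is a suffix of $v$, i.e.\ it is contained by $v$. Care is needed here with the boundary behaviour of \texttt{shorter} (which, as the paper notes just before the lemma, does not itself guarantee prefix-membership) and with the case $k''$ close to $k$, but the ordering argument is the crux.
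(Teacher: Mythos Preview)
Your approach is essentially the same as the paper's: both directions hinge on the single observation that, because \$ is mapped to the smallest symbol and the rows of $M$ are sorted by their \emph{left} contexts, any linker node of the form $\$^{\,k-1-k''}\cdot(\text{label of }u)$ must occupy position $\range{u}[1]$, and conversely the absence of a dummy left context at that position means $u$'s label is not a read prefix.  The paper does the ``if'' direction by contrapositive (no dummy left context $\Rightarrow$ no prefix $\Rightarrow$ no overlap) and the ``only if'' direction constructively, whereas you argue both directions directly, but the content is the same.

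Two small remarks.  First, in your ``if'' paragraph you write that the linker label is $\$^{\,k-1-d}w$ with $d\le k''$, immediately after arguing that the non-\$ suffix has length at least $k''$; you mean $d\ge k''$, and in fact the point is that the \emph{minimal} such $d$ (namely $d=k''$) wins the ordering, forcing $\texttt{llabel}(\range{u}[1])$ to be exactly $u$'s label, not a longer string.  Second, the dummy-only-node case you flag is a non-issue: since $u$ is a non-empty suffix of the (non-dummy) label of $v$, every row in $\range{u}$ ends with non-\$ symbols and cannot be purely dummy.  With those two points tightened, your argument matches the paper's.
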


\begin{proof}

If all the left contexts in $\range{u}$ are non-dummy strings, then $u$ does not appear as a prefix in $\mathcal{R}^*$, and hence, following none of its edges will lead to a valid overlap of $v$. On the other hand, if a suffix $u$ of $v$ appears as a prefix in $\mathcal{R}^*$, then there is a node $v'$ in $G'$ at order $k$ whose label is formed by the concatenation of a dummy string and $u$, and that by definition is a linker node contained by $v$. Since elements in $\range{u}$ are sorted by the left contexts, $v'$ is placed in $\range{u}[1]$, because the dummy string is always the lexicographically smallest.
\end{proof}

Lemma~\ref{l1} allows us to find overlapping nodes that are not directly linked via edges in the dBG, by looking in smaller dBG orders. We formalize this operation as follows, where we look for the longest valid suffix, that is, the one with maximum lexicographic index. 

\begin{itemize}
\item $\texttt{nextcontained}(v)$: returns the greatest linker node $v'$, in lexicographical order, whose \texttt{llabel} represents both a suffix of $v$ and a prefix of some other node in $G'$\@.
\end{itemize}

\begin{theorem}
There is an algorithm that solves {\rm \tt nextcontained} in $\mathcal{O}(k^2 \log \sigma)$ time.
\end{theorem}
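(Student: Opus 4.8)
The plan is to reduce $\texttt{nextcontained}(v)$ to at most $k$ independent tests, one per candidate overlap length, each carried out with ordinary $BOSS$ navigation only (crucially, without the $LCS$ array that \textit{rBOSS} will not store). Let $w=v[1..k-1]$ be the label of $v$; it is known from the traversal that reached $v$, or it can be recovered in $\mathcal{O}(k\log\sigma)$ time by $k-1$ \texttt{backward} steps, reading one symbol of $w$ per step from the $C$-block of the node reached. For an order $k''\in[m,k-2]$ let $u=w[k-k''..k-1]$ be the length-$k''$ suffix of $w$. By Lemma~\ref{l1}, $u$ yields a linker node contained by $v$ exactly when $\range{u}[1]$ is a linker node; since \$ is the smallest symbol and $M$ is sorted by reversed labels, this happens exactly when $\$^{\,k-1-k''}u$ labels a node of $G'$, i.e.\ when $u$ occurs as a prefix of some read of $\mathcal{R}^*$, and in that case $\range{u}[1]$ is precisely the node $\$^{\,k-1-k''}u$, whose \texttt{llabel} is $u$. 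Nested suffixes compare in reversed-label order by their length, so the longest valid suffix gives the linker node of greatest lexicographic index; hence $\texttt{nextcontained}(v)$ equals the node $\$^{\,k-1-k''}u$ for the largest $k''\le k-2$ such that $u$ is a read prefix, and is undefined if no such $k''$ exists.

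First I would therefore iterate $k''$ from $k-2$ down to $m$ and, for each, test whether the length-$k''$ suffix $u$ of $w$ is a prefix of a read of $\mathcal{R}^*$, stopping at the first success and returning the corresponding node. The test uses $BOSS$ alone: start at the dummy source node $\$^{k-1}$ and apply \texttt{forward} with $u[1],u[2],\dots,u[k'']$ in turn. The walk stays on dummy-padded nodes, it succeeds iff every required outgoing edge is present (the $j$-th step succeeds iff $u[1..j]$ is a read prefix), and on success it ends exactly at the linker node $\$^{\,k-1-k''}u$; the equivalences of the first paragraph then guarantee that the first $k''$ that succeeds yields the answer. Each \texttt{forward} costs $\mathcal{O}(\log\sigma)$ over the Wavelet Tree on $E$, so one test costs $\mathcal{O}(k''\log\sigma)=\mathcal{O}(k\log\sigma)$; there are $\mathcal{O}(k)$ tests, plus the $\mathcal{O}(k\log\sigma)$ preprocessing to obtain $w$, for a total of $\mathcal{O}(k^2\log\sigma)$.

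The main obstacle is not technical depth but justifying the $\Theta(k)$ per-test cost under the constraint that \textit{rBOSS} keeps no $LCS$ structure: the $\mathcal{O}(\log k)$-time \texttt{shorter} of \textit{VO-BOSS} is unavailable, so the row range (equivalently, the prefix-membership) of each suffix of $w$ must be recomputed from scratch, and since consecutive suffixes $w[k-k''..k-1]$ and $w[k-k''-1..k-1]$ already differ in their first symbol, no walk can be shared across iterations — this is precisely the redundancy that the overlap tree introduced later removes. The remaining items are the routine ones folded into the first paragraph: that Lemma~\ref{l1}'s condition ``$\range{u}[1]$ is a linker node'' is equivalent to ``$u$ is a read prefix'' and to ``the forward walk of $u$ from $\$^{k-1}$ succeeds'', and that ``longest valid suffix'' coincides with ``greatest linker node''. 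Both follow from the ordering of dummy-padded rows in $M$, with \$ smallest.
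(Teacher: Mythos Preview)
Your argument is correct and reaches the stated bound, but it takes a different route from the paper. The paper works inside \textit{VO-BOSS}: starting from $v$ it repeatedly applies \texttt{shorter} to decrease the order by one, and at each step checks Lemma~\ref{l1} by calling \texttt{llabel} on the first row of the new range; since \texttt{shorter} costs $\mathcal{O}(\log k)$ and \texttt{llabel} costs $\mathcal{O}(k\log\sigma)$, the at most $k-2$ iterations give $\mathcal{O}(k^{2}\log\sigma)$. You instead bypass the $LCS$ array entirely: you materialise the label $w$ once and, for each candidate suffix length, test read-prefix membership by a fresh \texttt{forward} walk from the all-\$ source, which also costs $\mathcal{O}(k\log\sigma)$ per candidate and lands directly on the desired linker node.

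The difference is mainly one of assumed infrastructure. Your constraint that ``\textit{rBOSS} keeps no $LCS$ structure'' is self-imposed and not yet in force at this point of the paper: the theorem is stated and proved while \textit{VO-BOSS} is still the ambient index, precisely to motivate replacing \texttt{shorter} by the overlap tree in Theorem~\ref{t2}. With the $LCS$ available, the paper's proof is the more direct one-liner; your version is more elementary in that it uses only base \textit{BOSS} navigation and, as a bonus, shows that the $\mathcal{O}(k^{2}\log\sigma)$ bound does not depend on the $LCS$ at all. One small point to tidy: when \texttt{nextcontained} is invoked from \texttt{buildL} on a linker argument, take $w=\texttt{llabel}(v)$ rather than the full \$-padded label, so that the candidate suffixes never contain dummy symbols and your prefix-walk test remains equivalent to Lemma~\ref{l1}.
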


\begin{proof}
Incrementally decrease the order of $v$ by one until reaching a node $u$ with $\range{u}\supset \range{v}$, and that satisfies Lemma~\ref{l1}. If such $u$ exists, return it. If the order of $v$ decreases below $m$ before finding $u$, then $v$ does not contain any linker node $v'$ with $|\texttt{llabel}(v')| \ge m$. In such case, a dummy vo-dBG node is returned. The function reduces the order up to $k-2$ times. In each iteration, the operations \texttt{shorter} and \texttt{llabel} (to check Lemma \ref{l1}) are used, which take $\mathcal{O}(\log k)$ and $\mathcal{O}(k\log\sigma)$ time, respectively. Thus, the total time is $\mathcal{O}(k^{2}\log\sigma)$. 
\end{proof}

Notice, however, that a vo-dBG node in $G'$ might have more than one contained linker node, and those linkers whose \texttt{llabel} is of length $\ge m$ represent edges in the overlap graph. A useful operation is to build a set $L$ with all those relevant linkers. We can then follow the outgoing edges of every $l \in L$ to infer the solid nodes that overlap $v$ by at least $m$ symbols.

\begin{itemize}
\item \texttt{buildL($v$,$m$)}: the set of all the linker nodes contained by $v$ that represent a suffix of $v$ of length $\ge m$. 
\end{itemize}

Function \texttt{buildL} applies \texttt{nextcontained} iteratively until reaching a node $u$ that contains a linker node whose \texttt{llabel} has length below $m$. The rationale is that if $v$ contains $v'$, and in turn $v'$ contains $v''$, then $v$ also contains $v''$. 
Algorithm~\ref{a1} (in Appendix~\ref{pseudo}) shows the details.

Note that, if we chose $k=z+1$ to build \textit{VO-BOSS}$(G')$, then we are simulating the full overlap graph in compressed space. The edges are not stored explicitly, but computed on the fly by first obtaining $L=\texttt{buildL}(v,m)$, and then following the dBG outgoing edges of every $l \in L$. Still, the complexities of the involved operations \texttt{nextcontained} makes \textit{VO-BOSS} slow for exhaustive traversals, which is our main interest. We design a faster scheme in which follows; Figure~\ref{fig:rboss} exemplifies the various concepts. 

\subparagraph{A compact data structure to compute overlaps. \rm 
The function \texttt{buildL} can be regarded as a bottom-up traversal of the trie $T$ induced by the $(k-1)$-length labels of $M$ read in reverse. Every trie node $t$ corresponds to a vo-dBG node $v$ whose order is the string depth of $t$. The traversal starts in the trie leaf $t$ corresponding to the vo-dBG node $v$ given to \texttt{buildL}, and continues upward until finding the last ancestor $t'$ of $t$ with string depth $\ge m$. The movement from $t$ to $t'$ can be regarded as a sequence of applications of \texttt{nextcontained}. In each such application, we move from a node $t$ to its nearest ancestor $t'$ that is maximal (i.e., has more than one child) and whose leftmost child edge is labeled by a \$. \\}

Since non-maximal nodes in $T$ are not relevant for building $L$, the function \texttt{nextcontained} can be reimplemented using the topology of the \emph{compact} trie $T$ (i.e., collapsing unary paths) represented with $BP$ (see Section \ref{ordtrees}) instead of using \texttt{shorter}. In this way, we can get rid of the $LCS$ structure of  \textit{VO-BOSS}.

The resulting \textit{rBOSS} index can be built in linear time, as detailed in Appendix~\ref{building}.

Replacing the $LCS$ with the topology of $T$ in $BP$ poses two problems, though. First, it is not possible to define a minimum dBG order $m$ from which overlaps are not allowed, and second, Lemma \ref{l1} cannot be checked. Both problems arise because, unlike the $LCS$, the $BP$ data structure does not encode the string depths of the tree nodes (and thus, the represented node lengths). Still, the topology of $T$ can be reduced to precisely the nodes of interest for \texttt{nextcontained}, and thus avoid any check. We call this structurally-compressed version of $T$ the \emph{overlap tree}.

\begin{figure}[t]
\centering
\includegraphics[width=0.87\linewidth]{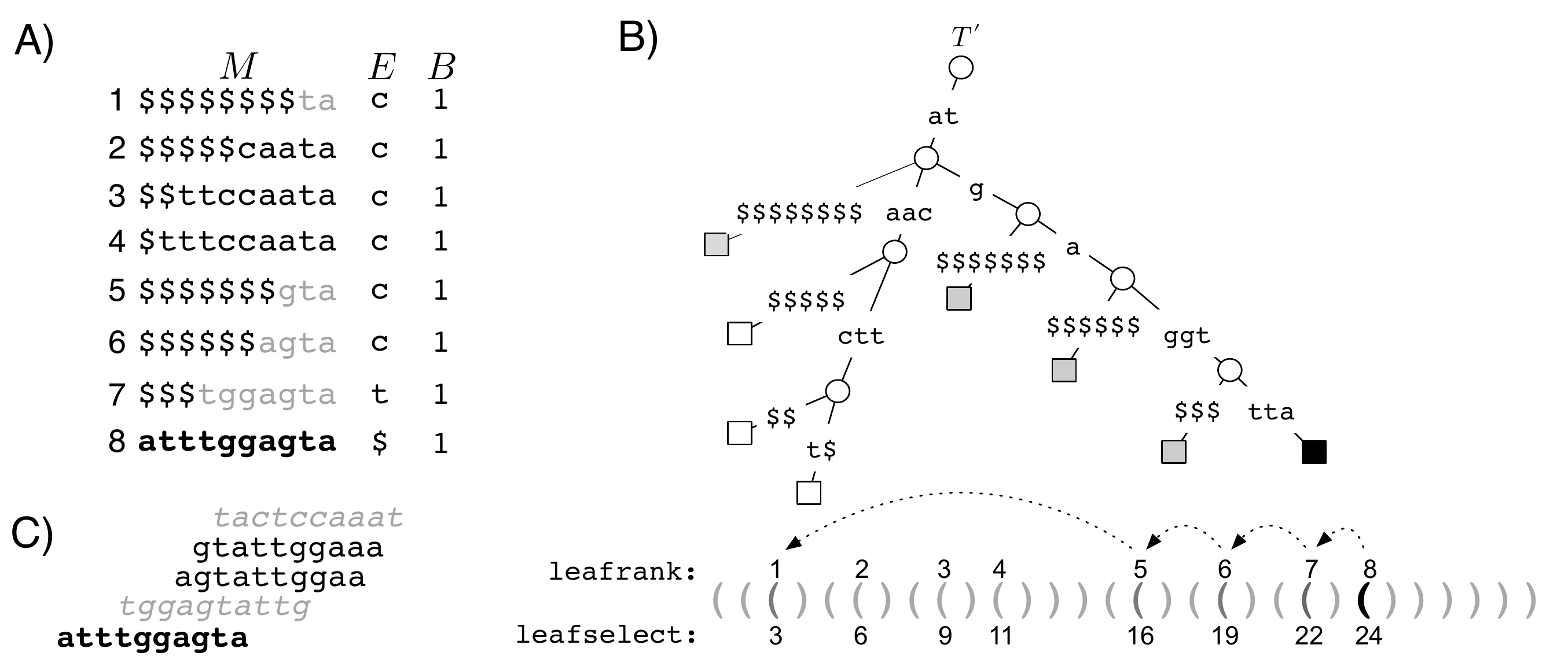}
\caption{A segment of \textit{rBOSS} for the set $\mathcal{R}$=\{\texttt{atttggagta},\texttt{gtattggaaa},\texttt{agtattggaa},\texttt{caatactcca}\}, with parameters $k=11$ and $m=2$. The computation of the forward overlaps for $R_1=\mathcal{R}[1]$ are also shown. A) range of $M$ that includes the solid node $v$ that represents sequence $R_1$ (sequence in bold in row 8) and its contained nodes (grayed rows). B) subtree of $T'$ that maps the range of $M$ in A). The black leaf maps $v$ in $T'$ and the gray leaves map the linker nodes contained by $v$. The array of parentheses below represents the same subtree in $BP$. The rank of every leaf (\texttt{leafrank}) is shown above its relative position in the $BP$ array (\texttt{leafselect}). Each dashed arrow represents a call of the function \texttt{nextcontained} and all the arrows together stand for the function \texttt{buildL}. C) Forward overlaps of $v$. Gray sequences in italic are those reads where the reverse complement matches with $v$.}
\label{fig:rboss}
\end{figure}

\begin{theorem}\label{t2}
There is a structure using $4 + o(1)$ bits per dBG node that implements the function {\rm\tt nextcontained} in $\mathcal{O}(1)$ time and the function {\rm\tt buildL} in time $\mathcal{O}(k)$.
\end{theorem}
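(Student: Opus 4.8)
The plan is to build the \emph{overlap tree} $T'$ explicitly and show it supports both primitives directly. Recall that \texttt{buildL} is a bottom-up walk in the compact trie $T$ of the reversed $(k-1)$-labels of $M$ that visits exactly the \emph{maximal} trie nodes whose leftmost child edge is labeled \$ and whose string depth is at least $m$; by Lemma~\ref{l1} these are precisely the vo-dBG nodes occurring as linker nodes contained by the input, and, since \$ maps to $1$ and the dummy symbols form a suffix of a linker label read in reverse, each such \$-edge is a single edge of the compact trie leading to one leaf, namely that contained linker. Call a trie node \emph{relevant} if it satisfies these conditions. I would define $T'$ as the tree obtained from $T$ by keeping its root, all its leaves, and all relevant internal nodes, and contracting every other internal node into its parent; $T'$ is stored as a balanced-parenthesis sequence $BP$ together with the $o(n)$-bit structures of Section~\ref{ordtrees} for constant-time tree navigation and for \texttt{leafrank}/\texttt{leafselect}.

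Next I would settle the space bound. A trie with $n$ leaves has at most $n-1$ branching nodes, so $T'$ has at most $2n$ nodes and its $BP$ encoding uses at most $4n$ bits; with the $o(n)$-bit support this is $4+o(1)$ bits per dBG node. The key invariant is that contraction never reorders leaves and never changes the leaf range spanned by a surviving node: hence the $i$-th leaf of $T'$ in left-to-right order is still row $i$ of $M$, the subtree of any surviving node $t$ still spans exactly the range $\range{u}$ of the vo-dBG node $u$ it represents, and by Lemma~\ref{l1} the leftmost leaf of that subtree is the linker node contained by $u$.

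With these in hand the operations are immediate. To evaluate \texttt{nextcontained} on a node $v$, map $\range{v}$ to the $BP$ position of its leaf with \texttt{leafselect}, take the parent $t$ of that leaf in $T'$, and return the linker node at the leftmost leaf of $t$ (recovered with a constant-time leftmost-descendant query and \texttt{leafrank}); if $t$ is the root of $T'$, return the dummy node. Because every internal node of $T'$ other than the root is relevant, this parent is exactly the deepest, hence longest (equivalently, of maximum index), relevant ancestor of $v$ in $T$, which is what \texttt{nextcontained} must return, and the threshold $m$ needs no run-time check because shallower nodes are absent from $T'$. Each step is a constant number of $BP$ operations, so \texttt{nextcontained} runs in $\mathcal{O}(1)$. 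For \texttt{buildL}$(v,m)$ I would iterate this walk, keeping the current $T'$-node across iterations so each step costs only one \texttt{parent} call plus one leftmost-leaf/\texttt{leafrank} pair, appending the recovered linker at each relevant ancestor until the root is reached; the number of iterations is at most the number of branching ancestors of a leaf of $T$, which is bounded by the string depth $k-1$, so \texttt{buildL} runs in $\mathcal{O}(k)$.

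The main obstacle is the correctness of the contraction step: one must verify that restricting $T$ to root, leaves and relevant nodes loses no information, i.e., that the leaf order and the leaf ranges of surviving nodes are unchanged by contraction (so that Lemma~\ref{l1} still reads off each contained linker as a leftmost leaf), and that visiting the relevant ancestors of $v$'s leaf from the bottom up coincides with iterating \texttt{nextcontained} under the threshold $m$. The rest is routine: the bit budget follows once the $n$-leaf / $(n-1)$-branching-node bound is stated, and the only boundary cases are a node all of whose contained linkers have \texttt{llabel} shorter than $m$ (the walk then reaches the root and the dummy node is returned) and the fact that orders strictly decrease along each step, so no degenerate overlap of size $k-1$ arises.
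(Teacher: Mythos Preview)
Your construction of $T'$, its $BP$ encoding, and the space accounting are essentially identical to the paper's proof, and your invariants about leaf order and leaf ranges are exactly the right justification.

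There is one small but genuine gap in your description of \texttt{nextcontained}. You write: map $v$ to its leaf, take the parent $t$, and return the leftmost leaf of $t$. But when $v$ is itself a linker node, its leaf is already the leftmost child of $t$, so your procedure returns $v$ rather than the \emph{next} contained linker. The paper handles this explicitly (see Algorithm~\ref{a2}): if the leftmost child of $t$ is the leaf you started from, you must go up one more level and take the leftmost child of $\texttt{parent}(t)$. Your \texttt{buildL} implementation happens to sidestep the issue because you keep the current internal $T'$-node across iterations and only ever call \texttt{parent}, but the standalone primitive as you stated it is incorrect on linker inputs, and \texttt{nextcontained} is specified to work on such inputs (indeed the baseline \texttt{buildL} in Algorithm~\ref{a1} calls it on the linker returned by the previous call). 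Adding the one-line exception fixes it and keeps the cost $\mathcal{O}(1)$.
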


\begin{proof}
The structure is the $BP$ encoding of a tree $T'$ that is obtained by removing some nodes from the compact trie $T$ ($T$ has one leaf per dBG node, and less than 2 nodes per dBG node because it is compact). First, all the internal nodes of $T$ with string depth below $m$ are discarded, and the subtrees left are connected to the root of $T'$. Second, every internal node $t' \in T$ whose  leftmost-child edge is not labeled by a dummy string is also discarded, and its children are recursively connected to the parent of $t'$. Note that all the leaves of $T$ are in $T'$.

Therefore, $T'$ has precisely the nodes of interest for operation {\rm\tt nextcontained}$(v)$. We simply find the $i$th left-to-right leaf $t$ in $T'$, where $i$ is the row of $M$ corresponding to $v$ (note that rows of $M$ and leaves of $T$ and $T'$ are in the same order). Then, we move to the parent $t'$ of $t$ and return its leftmost child. An exception occurs if the leftmost child of $t'$ is precisely $t$, which means that $v$ is a linker node and thus its next contained node is the leftmost child of the parent of $t'$. Finally, we return the rank of the desired leftmost leaf. 

Algorithm~\ref{a2} (in Appendix~\ref{pseudo}) shows the details.
All its operations are implemented in constant time in $BP$, and thus \texttt{nextcontained($v$)} is implemented in $\mathcal{O}(1)$ time. The function \texttt{buildL} stays the same, and its cost is dominated by the (at most) $k-2$ calls to \texttt{nextcontained}.  
\end{proof}

Once the set $L$ with the contained nodes of $v$ is built, we can compute the valid forward overlaps of $v$ by following the edges of every $l \in L$ until finding a p-node. We then define:

\begin{itemize} 
  \item \texttt{foverlaps($v$)}: the set of p-nodes whose prefixes overlap a suffix of $v$ of length $\ge m$.
\end{itemize}

Computing the forward overlaps of $v$ by following the edges of every $l \in L$ can be exponential. We devise a more efficient approach that uses $T'$ and the reverse complements of the node labels. We need to define first the idea of bi-directionality in \textit{rBOSS}.

\subparagraph{Simulating bi-directionality. \rm  When building \textit{rBOSS} on $\mathcal{R}^*$, the reverse complement $R_i^{rc}$ of every read $R_i \in \mathcal{R}$ is also included, because there are several combinations in which two reads, $R_i$ and $R_j$, can have a valid suffix-prefix overlap: $(R_i,R_j)$, $(R_i, R_j^{rc})$, $(R_i^{rc},R_j)$, or $(R_i^{rc},R_j^{rc})$, and all must be encoded in $T'$.
An interesting consequence of including the reverse complements is that the topology of the dBG becomes symmetric.}

\begin{lemma}\label{lrc}
The incoming symbols of a node $v$ are the DNA complements of the outgoing symbols of the node $v^{rc}$ that represents the reverse complement of $v$. Further, the outgoing nodes of $v^{rc}$ are the same as the DNA complements of the incoming nodes of $v$.
\end{lemma}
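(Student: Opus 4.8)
The plan is to argue directly from the definition of the reverse complement and the way edges are defined in a de Bruijn graph. Recall that in $DBG_{\mathcal{R}^*,k}$ an edge entering a node $v$ (labelled by a $(k-1)$-mer) is determined by a length-$k$ substring $S[1..k]$ of some read in $\mathcal{R}^*$ with $v = S[2..k]$, and the incoming symbol is $S[1]$; the source of that edge is the node $S[1..k-1]$. Symmetrically, an edge leaving $v^{rc}$ corresponds to a length-$k$ substring $S'[1..k]$ with $S'[1..k-1] = v^{rc}$, outgoing symbol $S'[k]$, and target $S'[2..k]$.

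First I would use the closure of $\mathcal{R}^*$ under reverse complement: $S$ is a substring of some read in $\mathcal{R}^*$ if and only if $S^{rc}$ is a substring of some read in $\mathcal{R}^*$, because $\mathcal{R}^* = \mathcal{R}\cup\mathcal{R}^{rc}$ and $(R^{rc})^{rc}=R$. So the length-$k$ strings that witness edges come in reverse-complement pairs $S \leftrightarrow S^{rc}$. Second, I would spell out the bookkeeping of indices under the $rc$ map: if $S[1..k]$ witnesses an incoming edge of $v$ with incoming symbol $a = S[1]$ and source node $w = S[1..k-1]$, then $S^{rc}$ has $S^{rc}[k] = \pi(S[1]) = \pi(a)$ and $S^{rc}[1..k-1] = (S[2..k])^{rc} = v^{rc}$, so $S^{rc}$ witnesses an \emph{outgoing} edge of $v^{rc}$ whose label is $\pi(a)$ and whose target is $S^{rc}[2..k] = (S[1..k-1])^{rc} = w^{rc}$. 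This simultaneously gives both halves of the lemma: the multiset of incoming symbols of $v$ maps bijectively, via $\pi$, to the multiset of outgoing symbols of $v^{rc}$; and the set of in-neighbours $\{w\}$ of $v$ maps bijectively, via $rc$, to the set of out-neighbours $\{w^{rc}\}$ of $v^{rc}$. Running the same argument starting from an outgoing edge of $v^{rc}$ and applying $rc$ (which is an involution) gives the reverse inclusion, so the correspondences are genuine bijections.

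I would also note the one mild technical point: linker/dummy padding. A prefix of a read shorter than $k-1$ is padded on the left with \$'s, and a node can have incoming edges that come from such padded strings. Under reverse complement a left-padded prefix becomes a right-padded suffix, which (per the Preliminaries) is \emph{not} represented in this construction. However, this does not affect the statement: the lemma speaks of incoming \emph{nodes} and \emph{symbols} of a solid node $v$ versus outgoing ones of $v^{rc}$, and the dummy symbol is its own ``complement'' only in the degenerate sense; since we only build \textit{rBOSS} for the purpose of overlaps of length $\ge m \ge 2$ between genuine substrings, the relevant edges are exactly those witnessed by genuine (non-dummy) length-$k$ substrings, and for those the $rc$ correspondence above is exact. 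I would therefore state the lemma as applying to the solid part of the graph and remark that the dummy edges are handled consistently because the padding conventions for prefixes and suffixes are mirror images.

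The main obstacle is not conceptual but notational: getting the index arithmetic under the $rc$ map exactly right (that $(S[i..j])^{rc} = S^{rc}[k-j+1..k-i+1]$ for $S$ of length $k$) and making sure the ``nodes'' version of the claim is not just a restatement of the ``symbols'' version — it is slightly stronger because two different source nodes could in principle share an incoming symbol, and one must check the map on nodes is well-defined and injective, which follows because $rc$ is injective on strings. Once that identity is in place, the proof is a two-line substitution, so I would keep it short and let Figure~\ref{fig:rboss} carry the intuition.
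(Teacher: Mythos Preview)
Your argument is correct and is essentially the same as the paper's: both take a length-$k$ witness $S=abXc$ of an incoming edge $a$ into $v=bXc$, apply the reverse complement to obtain $S^{rc}=c^{c}X^{rc}b^{c}a^{c}$, and read off that $v^{rc}=c^{c}X^{rc}b^{c}$ has outgoing symbol $a^{c}=\pi(a)$ with target $(abX)^{rc}$. Your additional remarks on the reverse inclusion via the involution $rc$ and on dummy padding are sound but go beyond what the paper spells out; you can safely trim them, as the paper's proof is a one-paragraph substitution exactly along your lines.
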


\begin{proof}
Consider the $(k-1)$-length substring $bXc$ of $\mathcal{R}$, and a symbol $a$ that appears at the left of some occurrences of $bXc$. For building \textit{rBOSS}, both substrings $abXc$ and its reverse complement $(abXc)^{rc}=c^{c}X^{rc}b^{c}a^{c}$ are considered. As a result, the dBG node $v$ labeled $bXc$ will have and incoming symbol $a$, and the dBG node $v^{rc}$ labeled $(bXc)^{rc} = c^c X^{rc} b^c$ will have an outgoing symbol $a^c$. Thus, the label of node \texttt{forward($v^{rc}$, $a^c$)} will be $X^{rc}b^{c}a^{c}$, which is the reverse complement of string $abX$, the label of node \texttt{backward($v$,$a$)}. 
\end{proof}

As a result of including the reverse complements of the reads, the cost of computing the incoming symbols of node $v$ becomes proportional to the cost of computing the position of $v^{rc}$ in the \textit{BOSS} matrix.

\begin{theorem}\label{theo:rev_comp}
Computing the position in $M$ of $v^{rc}$ takes $\mathcal{O}(k\log\sigma)$ time. By augmenting \textit{rBOSS} with $s\log s$ extra bits, $s$ being the number of solid nodes, the time decreases to $\mathcal{O}(1)$.
\end{theorem}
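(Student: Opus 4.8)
The plan is to prove the two bounds in turn: the $\mathcal{O}(k\log\sigma)$ bound is a short composition of standard \textit{BOSS} operations, and the $\mathcal{O}(1)$ bound just memoizes that computation as a permutation.

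For the $\mathcal{O}(k\log\sigma)$ bound I would proceed in three steps. First, recover the $(k-1)$-length label $w$ of $v$ by reading its symbols off the \textit{BOSS} matrix one at a time; this is exactly the routine already invoked to implement \texttt{llabel} in the $\mathcal{O}(k^2\log\sigma)$ bound for \texttt{nextcontained}, and it costs $\mathcal{O}(k\log\sigma)$ because each of the $k-1$ extractions is a constant number of rank/select queries on the wavelet tree of $E$ together with a lookup in $C$. Second, form $w^{rc}$ in $\mathcal{O}(k)$ time by reversing $w$ and replacing each symbol by its complement under $\pi$. Third, locate the row of $M$ whose label equals $w^{rc}$ by the standard \textit{BOSS} $(k-1)$-mer search, a sequence of $k-1$ FM-index-style backward-search steps on $E$, each $\mathcal{O}(\log\sigma)$; since \textit{BOSS} orders nodes co-lexicographically by their $(k-1)$-length labels and these labels are pairwise distinct, the candidate range shrinks to a single row. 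At this point I would note, using that $\mathcal{R}^*$ is closed under reverse complement, that $w^{rc}$ is itself a length-$(k-1)$ substring of $\mathcal{R}^*$, so this row exists and is a solid node and the search never fails. Summing the three steps gives $\mathcal{O}(k\log\sigma)$.

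For the constant-time version, the key observation is that $v\mapsto v^{rc}$, restricted to the $s$ solid nodes, is an involution, hence a permutation of the solid rows of $M$ (a row equal to its own image, when the label is a reverse-complement palindrome, causes no difficulty). I would assign the solid rows their ranks $1,\dots,s$ in $M$-order using the bitmap that \textit{rBOSS} already keeps to distinguish solid nodes from linker nodes, equipped with constant-time rank and select. Then, once at construction time, I would fill an array $\rho[1..s]$ in which $\rho[j]$ is the solid-rank of $(v_j)^{rc}$ for the $j$-th solid node $v_j$, computing each entry by the $\mathcal{O}(k\log\sigma)$ procedure above (so $\mathcal{O}(sk\log\sigma)$ time in total, a one-time cost). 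The array $\rho$ occupies $s\lceil\log s\rceil\le s\log s+\mathcal{O}(s)$ bits and is the only genuinely new component, so the extra space is $s\log s$ bits up to lower-order terms. A query then costs $\mathcal{O}(1)$: a rank on the solid bitmap turns the row of $v$ into its solid-rank $j$, one array access gives $\rho[j]$, and a select on the solid bitmap turns $\rho[j]$ back into the $M$-row of $v^{rc}$.

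There is no deep obstacle here; the points that need care are (i) fixing the orientation conventions so that the string actually searched for is literally $w^{rc}$ and the \textit{BOSS} node search applies unchanged, (ii) justifying, via the closure of $\mathcal{R}^*$ under reverse complement, that $w^{rc}$ labels a solid node so that $\rho$ is a well-defined permutation, and (iii) being explicit that the bijection between solid rows and $[1..s]$ rides on the solid/linker bitmap already present in \textit{rBOSS}, so that the only new space charged is the permutation array and it matches the claimed $s\log s$ bits.
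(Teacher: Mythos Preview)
Your proof is correct and follows essentially the same approach as the paper: extract the label in $\mathcal{O}(k\log\sigma)$, reverse-complement it in $\mathcal{O}(k)$, and backward-search for it in $\mathcal{O}(k\log\sigma)$; for the $\mathcal{O}(1)$ variant, store an explicit permutation on the $s$ solid nodes using $s\log s$ bits. Your write-up is actually more careful than the paper's on the auxiliary points (closure of $\mathcal{R}^*$ under reverse complement, the involution property, and the use of rank/select on the solid bitmap to translate between $M$-rows and solid ranks), but the underlying argument is identical.
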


\begin{proof}
First, extract the label $lab$ of $v$, then compute its reverse complement $lab^{rc}$, and finally, perform \texttt{backwardsearch}$(lab^{rc})$. The label of $v$ is extracted in time $\mathcal{O}(k\log\sigma)$ with the FM-index, and computing its reverse complement takes $\mathcal{O}(k)$ time. The function \texttt{backwardsearch}, also defined on the FM-index, returns the range of $(k-1)$-length strings in $M$ suffixed by $lab^{rc}$, and it also takes $O(k\log \sigma)$ time. Therefore, computing the position of $v^{rc}$ in $M$ takes $\mathcal{O}(k\log\sigma)$ time. Alternatively, we can store an explicit permutation on the $s$ solid nodes, so that using $s\log s$ bits we find the position of $v^{rc}$ in $M$ in constant time.
\end{proof}

Theorem~\ref{theo:rev_comp} allows us to compute the forward overlaps of $v$ in time proportional to the size of the label of $v$.

\begin{theorem}\label{theo:fovp}
The function {\tt foverlaps} can be computed in $\mathcal{O}(k\log\sigma)$ time.
\end{theorem}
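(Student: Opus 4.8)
The plan is to combine the previous results to turn the potentially exponential edge-following of \texttt{foverlaps} into a single backward traversal guided by reverse complements. First I would invoke Theorem~\ref{t2} to build, in $\mathcal{O}(k)$ time, the set $L=\texttt{buildL}(v,m)$ of all linker nodes contained by $v$ whose \texttt{llabel} has length $\ge m$; each such linker corresponds to a suffix of $v$ of some length $o\in[m,k-2]$ that also occurs as a prefix of some read in $\mathcal{R}^*$. The key observation is that we do not want to descend from each linker $l\in L$ through the dBG to reach the p-nodes, since a single linker may fan out to exponentially many solid nodes. Instead, I would reverse-complement the picture: a forward overlap $v\oplus^{o}v'$ with $v'$ a p-node is, by Lemma~\ref{lrc}, mirrored by a backward structure around $(v')^{rc}$, which is an s-node, and the $o$-length prefix of $v'$ becomes the $o$-length suffix of $(v')^{rc}$, i.e.\ exactly \texttt{llabel}$(l)^{rc}$.

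The second step is therefore to work from $v^{rc}$. Using Theorem~\ref{theo:rev_comp} I would locate the position in $M$ of $v^{rc}$ in $\mathcal{O}(k\log\sigma)$ time (or $\mathcal{O}(1)$ with the optional permutation). The suffixes of $v$ that are prefixes of reads correspond, under reverse complement, to prefixes of $v^{rc}$ that are suffixes of reverse-complemented reads; concretely, reading $v^{rc}$ left to right and following the structure of $T'$ (equivalently, doing \texttt{backwardsearch} steps in the FM-index from $v^{rc}$) lets me enumerate, in increasing length, exactly the relevant overlap lengths $o$, and for each the corresponding ancestor node in the (reversed) \textit{BOSS} matrix that is an s-node — which is precisely $(v')^{rc}$ for the p-node $v'$ we want. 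Thus for each $l\in L$ I get its matching p-node $v'$ by one reverse-complement lookup rather than by graph expansion. Since $|L|\le k-2$ and each lookup is done by re-using the single $\mathcal{O}(k\log\sigma)$ backward traversal of $v^{rc}$ (one \texttt{backwardsearch} pass that passes through all $k-2$ prefixes), the total cost is $\mathcal{O}(k\log\sigma)$.

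Putting the pieces together: \texttt{buildL} costs $\mathcal{O}(k)$; extracting the label of $v$ and computing $v^{rc}$ costs $\mathcal{O}(k\log\sigma)$; the single backward traversal of $v^{rc}$ that simultaneously yields, for every $o$ in the range of overlap lengths realized by $L$, the position in $M$ of the corresponding s-node $(v')^{rc}$ costs $\mathcal{O}(k\log\sigma)$; and one more reverse-complement lookup per element turns each $(v')^{rc}$ back into the desired p-node $v'$, which again is absorbed into a single $\mathcal{O}(k\log\sigma)$ pass (or $\mathcal{O}(k)$ with the permutation of Theorem~\ref{theo:rev_comp}). Summing, \texttt{foverlaps}$(v)$ is computed in $\mathcal{O}(k\log\sigma)$ time, as claimed.

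The main obstacle I anticipate is the bookkeeping that ties the linkers in $L$ to the ancestors encountered along the backward traversal of $v^{rc}$: I must argue that the set of suffix lengths $o$ for which $v$ has a contained linker coincides exactly with the set of prefix lengths of $v^{rc}$ that extend to a full read suffix, and that the node reached by \texttt{backwardsearch} after $o$ steps from $v^{rc}$ is an s-node iff the length-$o$ suffix of $v$ is a read prefix, i.e.\ iff that linker is in $L$ and leads to a genuine p-node. This is where Lemma~\ref{lrc} (symmetry of the dBG topology) and Lemma~\ref{l1} (the linker-in-first-position characterization) do the real work; once the correspondence is pinned down, the time bound follows immediately from Theorems~\ref{t2} and~\ref{theo:rev_comp}. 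I would also note the edge case $o=k-2$, where $v'$ may be any solid node rather than necessarily a p-node, handled separately via a single \texttt{forward} step as in the ordinary dBG.
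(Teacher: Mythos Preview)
Your approach is essentially the paper's: build $L_v$ in $\mathcal{O}(k)$, then exploit reverse complementation to replace the exponential fan-out by a single left-to-right pass that spells $v^{rc}$ (the paper phrases this as starting from $l^{rc}$, where $l=L_v[|L_v|]$ is the shortest linker, and following the dBG path $p_{v^{rc}}$, which is exactly your ``one \texttt{backwardsearch} pass that passes through all prefixes''), reporting the reverse complement of each solid node encountered. One small accounting slip: the final per-element reverse-complement lookups are \emph{not} absorbed into a single $\mathcal{O}(k\log\sigma)$ pass without the permutation---each costs $\mathcal{O}(k\log\sigma)$ and there can be $\Theta(k)$ of them---so, like the paper, you must assume the $s\log s$-bit permutation of Theorem~\ref{theo:rev_comp} to get $\mathcal{O}(1)$ shifts and the stated $\mathcal{O}(k\log\sigma)$ total.
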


\begin{proof}
First, create $L_{v}=\texttt{buildL}(v)$, and then obtain the reverse complement of the linker node $l=L_{v}[|L_{v}|]$, that is, the one representing the smallest suffix of $v$. Second, compute $l^{rc}$ and search for the range $[i..j]$ in $M$ of the $(k-1)$-length strings suffixed by $l^{rc}$. From the edge symbols in $[i..j]$ follow the dBG path $p_{v^{rc}}$ that spells the label of $v^{rc}$. Finally, every time a solid node $v'$ is reached during the traversal of $p_{v^{rc}}$, report its reverse complement as a forward overlap for $v$. Computing $L$ takes $\mathcal{O}(k)$ time. Both searching for $[i..i]$ and traversing $p_{v^{rc}}$ take $\mathcal{O}(k\log \sigma)$ time. All the shifts between reverse complements take $\mathcal{O}(1)$ time if we use permutations.
\end{proof}

Figure \ref{fig:rc} exemplifies the overlap function. Note that backward overlaps can be obtained by computing \texttt{foverlaps} for the reverse complement of $v$.
The complexity of \texttt{foverlaps} is the same obtained by Simpson and Durbin \cite{simpson2012efficient}.

By using $T'$ and Lemma~\ref{lrc} we can access the topology of the overlaps, and to retrieve extra information from the data that irreductible overlap graphs or dBGs do not have. We formalize this idea as \emph{weighted irreductible overlaps}.

\begin{figure}[t]
\centering
\includegraphics[width=0.75\linewidth]{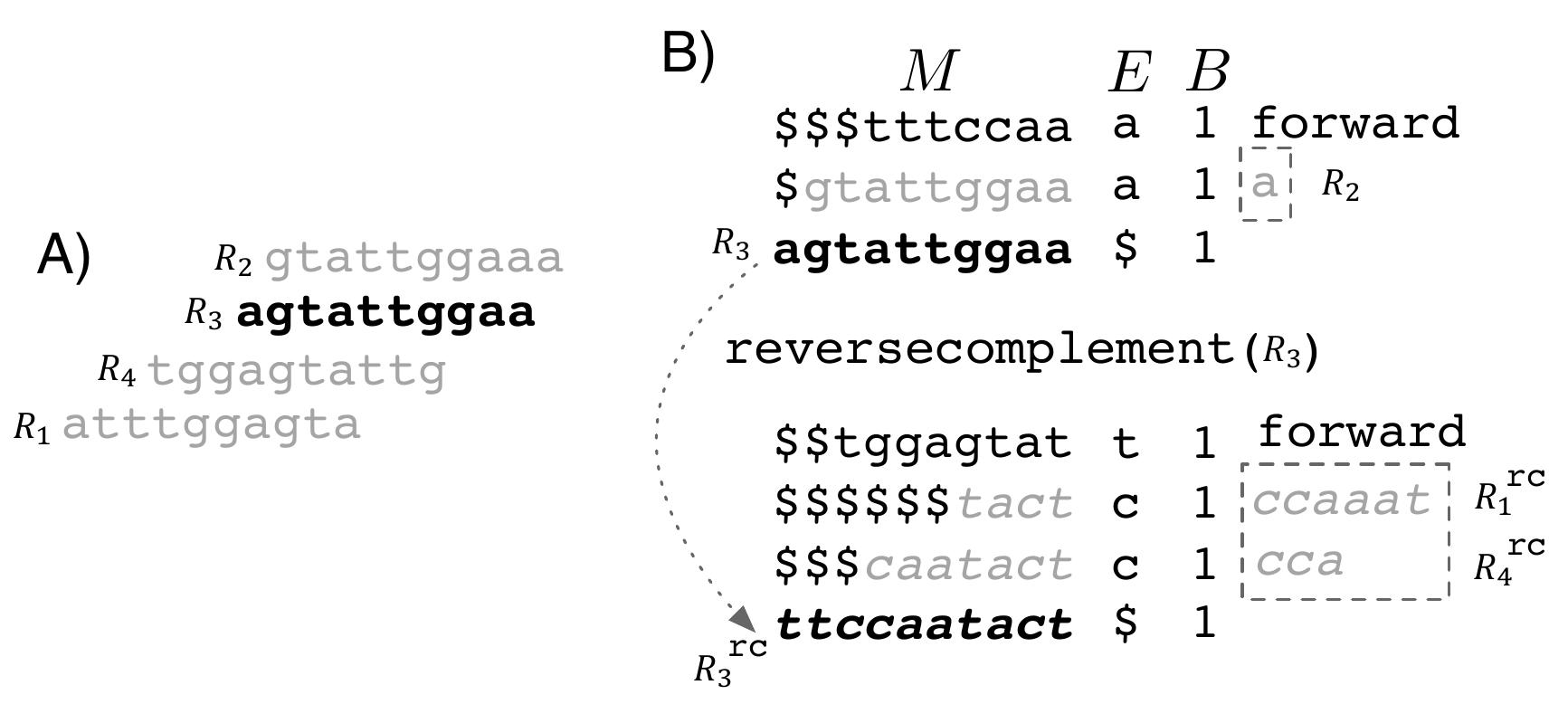}
\caption{Example of the computation of the forward and backward overlaps for sequence $R_3=\mathcal{R}[3]$ of the example of Figure~\ref{fig:rboss}. A) Sequence $R_2$ is a forward overlap for $R_3$ and sequences $R_4$ and $R_1$ are backward overlaps. B) The upper matrix represents the range in $M$ that includes the solid node that represents $R_3$ (row in bold), and its contained nodes (grayed row) and the lower matrix is the range of $M$ that includes the solid node of the reverse complement of $R_3$, $R_3^{rc}$ (row in bold and italic), and its contained nodes (gray rows in italic). Gray symbols to the right of every matrix are the outgoing symbols retrieved from applying \texttt{foward} from every contained node until reaching the next solid node. For the case of $R_3^{rc}$, these solid nodes are $R_{4}^{rc}$ and $R_{1}^{rc}$, the reverse complements of $R_4$ and $R_1$, respectively.} 
\label{fig:rc}
\end{figure}

\subparagraph{Weighting irreductible overlaps. \rm Given an irreductible overlap $v\oplus^{o}v$ between solid nodes $v$ and $v'$, we can use the number of unique transitive overlaps between them as a measure of confidence, \texttt{weight}$(v\oplus^{o}v')<o-m$, for \texttt{label}($v\oplus^{o}v'$). In Figure \ref{fig:rboss_weights} we show different examples in which \texttt{weight}$(v\oplus^{o}v')$ can be helpful to detect patterns in the data. The function $\texttt{foverlaps}(v)$ in our scheme can be modified to return the list of irreductible overlaps for $v$, with their weights included. The idea is as follows: once the range $[i..j]$ is obtained, we form an array $Y$ with the dBG nodes in $[i..j]$ that are not contained by any other node within the same range. The set $Y$ will represent the possible irreductible overlaps of $v$. $Y$ is built in one scan over $[i..j]$ by checking which $T'$ leaves are not the leftmost children of their parent. The weight of every $y \in Y$ is computed as its depth minus the depth of its closest ancestor in $T'$ with more than two children. We do the subtraction because only unique transitive connections count as weights. Every $y \in Y$ and its weighting nodes form a subrange $\range{y}$ in $[i,j]$. We perform a right traversal starting from the outgoing edges of $\range{y}$ to retrieve $p_{v^{rc}}$ as before. In the process, however, one or more elements of $Y$ can be discarded or their weights decreased if they do have a branch spelling the reverse complement of some $l \in L_{v}$. Figure~\ref{fig:rboss_woverlaps} exemplifies the process.}\label{wio}

\begin{figure}[hbt!]
\centering
\includegraphics[width=\linewidth]{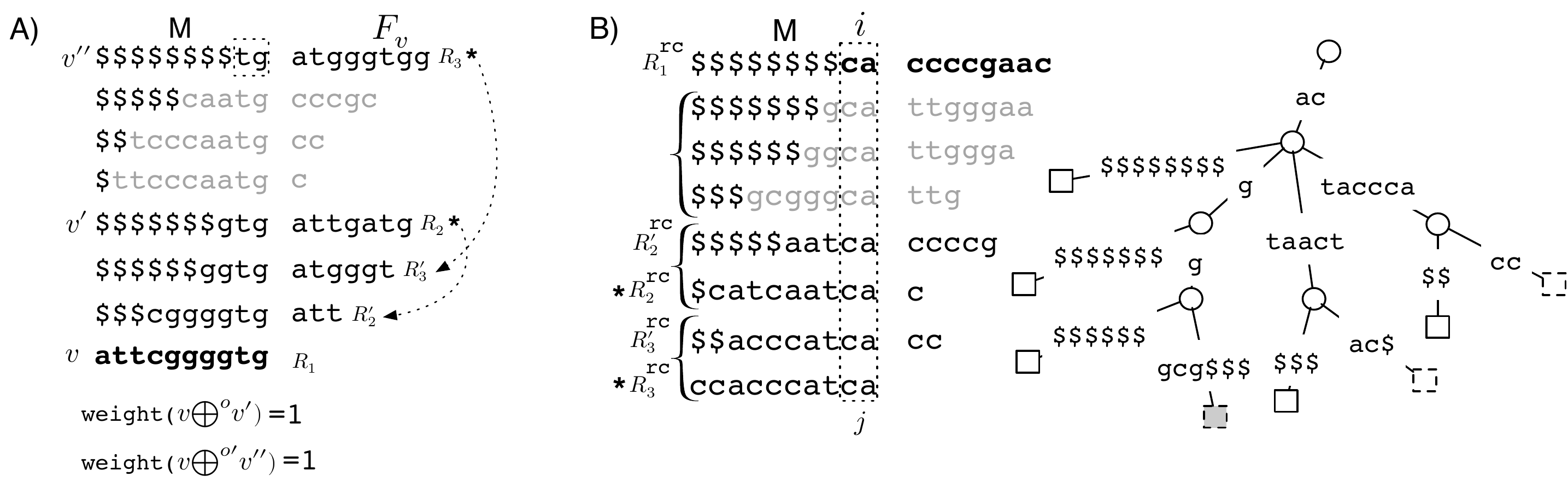}
\caption{Computation of weighted forward overlaps as described in Section \ref{rboss_sec}. A) Sequence $R_1$, represented by dBG node $v$ (bold row in $M$), and its irreductible overlaps, $R_2$ and $R_3$ (asterisks), represented by dBG nodes $v'$ and $v''$ respectively. The transitive overlaps that weight every irreductible overlap are shown with arrows to the right of $M$. The dashed box in the upper left corner of $M$ is $lab=\texttt{llabel}(L_{v}[|L_{v}|])$. The left side of B) is the range $[i..j]$ in $M$ resulting from searching the range of $(k-1)$-length strings suffixed by $lab^{rc}$. The right side of B) is the subtree in $T'$ induced by the dBG nodes in $[i..j]$. Dashed leaves in the subtree are those that are (probably) irreductible overlaps of $v$ (elements in $Y$). The gray dashed leaf of the subtree corresponds to an element $y'$ that was originally added to $Y$ but then discarded because none of the branches starting in its outgoing edges spell the reverse complement of some $l\in L_{v}$. Gray leaves are the ones that contribute to the weight of $y'$. The element $y'$ and its weighting nodes are also represented in the range $[i..j]$ as gray rows. Subranges of $[i..j]$ whose outgoing branches lead to weighted irreductible overlaps of $v$ are shown in curly brackets.} 
\label{fig:rboss_woverlaps}
\end{figure}

\section{Experiments}

We implemented \textit{rBOSS} as a \texttt{C++} library, using the \texttt{SDSL} library~\cite{gbmp2014sea} as a base. In Section~\ref{prems} we stated that vector $E$ can be represented using a Huffman-shaped Wavelet Tree, but our implementation uses run-length encoding~\cite{MN05} to exploit repetitions in the reads. We also include an extra bitmap $S[1..n]$ that marks the position of every solid node in $M$, which speeds up iterating over the solid nodes. We did not include the permutation to compute the reverse complements of the dBG nodes in constant time. Instead, we use \texttt{backwardsearch} as stated in Theorem \ref{theo:rev_comp}. Additionally, we implemented the \textit{VO-BOSS} data structure by modifying our \textit{rBOSS} implementation and merging it with segments of the code\footnote{https://github.com/cosmo-team/cosmo} from Boucher et al.~\cite{BBGPS15}. Our complete code is available at {\tt https://bitbucket.org/DiegoDiazDominguez/eboss-dt/src/master/}. The compilation flags we used were \texttt{-msse4.2 -O3 -funroll-loops -fomit-frame-pointer -ffast-math}.

We used \texttt{wgsim}~\cite{Li2012} to simulate a sequencing dataset (in \texttt{FASTQ} format) from the E.coli genome with 15x coverage. A total of 549,845 reads were generated, each 150 bases long, yielding a dataset of 185 MB\@.   
The input parameters for building \textit{rBOSS} are $k$ and $m$. We used a minimum value of 50 for $k$, and increased it up to 110 in intervals of 5. For every $k$, we used 6 values of $m$, from 15 to 40, also in intervals of 5. This makes up 72 indexes. We also built equivalent \textit{VO-BOSS} instances using the same values for $k$.

\subparagraph{Space and construction time. \rm
The sizes of the resulting \textit{rBOSS} indexes are shown in Figure~\ref{fig:index_sizes}.A, which grow fairly linearly with $k$, at $0.29 + 0.036k$ bits per input symbol (i.e., 50--100 MB for our dataset), and do not depend much on $m$. Figure~\ref{fig:rboss_stats} shows elapsed times and memory peaks during construction. These are also linear in $k$; for example with $m=20$ (the most demanding value) the {\em rBOSS} index for our dataset is built in 4--6 minutes with a memory peak around 2.5 GB. Figure~\ref{fig:index_sizes}.B compares the sizes of \textit{VO-BOSS} and \textit{rBOSS}, showing that \emph{rBOSS} is more than $20\%$ smaller on average.}

\begin{figure}[t]
\centering
\includegraphics[width=0.8\linewidth]{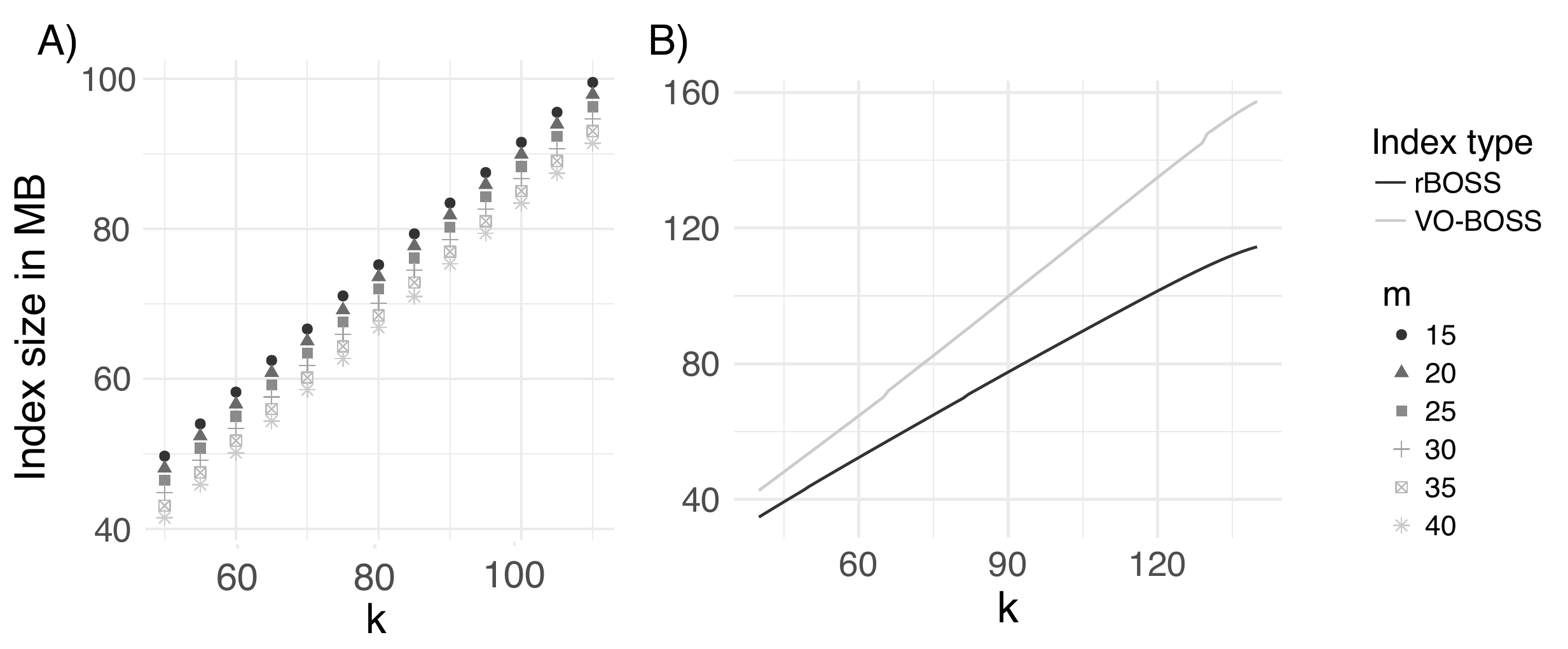}
\caption{Index size statistics. The x-axis gives the values of $k$ and the y-axis is the size of the index in MB. A) Sizes of the \textit{rBOSS} indexes.  Shapes denote the different values of $m$. B) Index size comparison between \textit{rBOSS} and \textit{VO-BOSS}, building the \textit{rBOSS} indexes with $m=30$.} 
\label{fig:index_sizes}
\end{figure}

The space breakdown of our index is given in Figure~\ref{fig:rboss_frac}, and further statistics in Table~\ref{tab:rboss_stats}. The most expensive data structure in terms of space (50\%--65\%) is the BP representation of $T'$. The sequence $E$ uses 20\%--35\%, and the rest are the bitmaps $B$ and $S$.

\begin{figure}[t]
\centering
\includegraphics[width=0.8\linewidth]{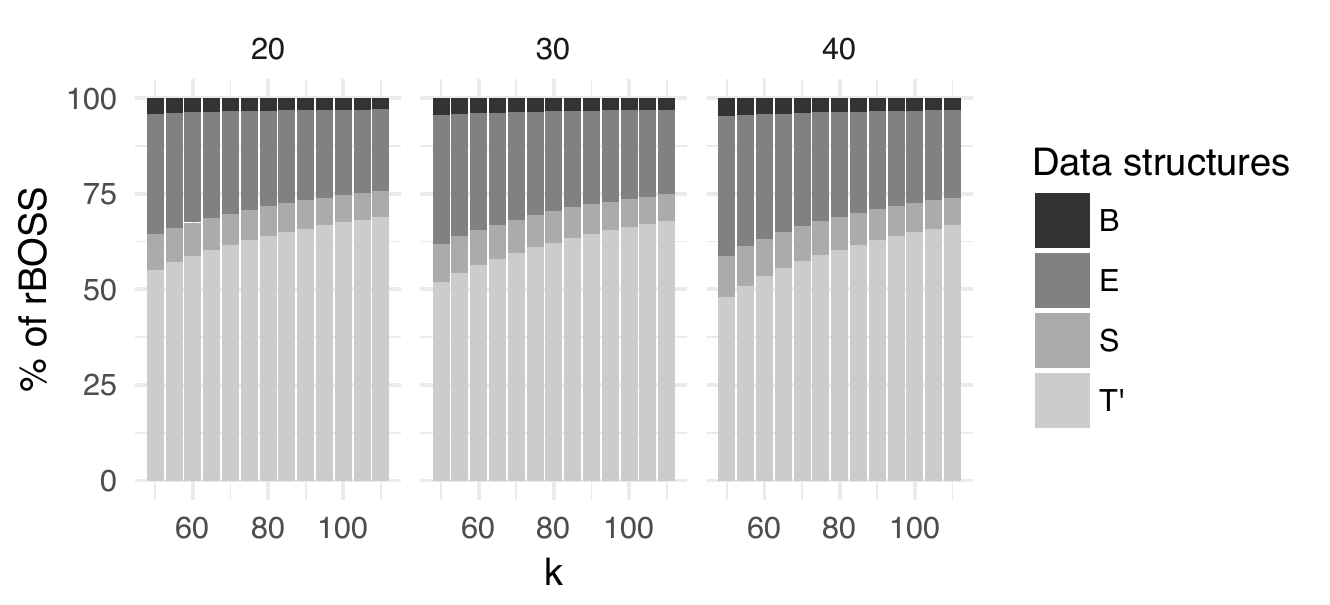}
\caption{Stacked barplot with the percentage that each substructure uses in \textit{rBOSS}. The x-axis shows the value of $k$ and the numbers on top of the plot are the $m$ values.} 
\label{fig:rboss_frac}
\end{figure}

\subparagraph{Time for the primitives. \rm
For every index, we took 1000 solid nodes at random and computed the mean elapsed time for functions \texttt{nextcontained}, \texttt{buildL}, \texttt{foverlaps}. For the \textit{rBOSS} indexes, we also measured the mean elapsed time for \texttt{reversecomplement}. Table~\ref{tab:rboss_funcs} shows the results. Within the \textit{rBOSS} implementation, \texttt{nextcontained} is the fastest operation, with a stable time around 1.5 $\mu$sec across different values of $m$ and $k$. Operation \texttt{buildL} becomes slower as we increase $k$, but faster as we increase $m$. This is expected because the larger $k$, the longer the traversal through $T'$, but if $m$ grows the traversal shortens as well. In all cases, \texttt{buildL} takes under 10 $\mu$sec. The cost of operation \texttt{foverlaps} grows linearly with $k$, but also decreases as we increase $m$, reaching the millisecond. This is much slower than previous operations, dominated by the time to find the reverse complement of the shortest linker node with backward search. Finally, the time of \texttt{reversecomplement} is also a few milliseconds, growing steadily with $k$ regardless of $m$. \\}

Table~\ref{tab:rboss_funcs} also compares \textit{rBOSS} with the  \textit{VO-BOSS} implementation.
All the functions are clearly slower in \textit{VO-BOSS}, by two orders of magnitude for \texttt{next-contained} and \texttt{buildL}, and by a factor around 2 for \texttt{foverlaps}.

%\begin{figure*}[t]
%    \centering
%    \begin{minipage}[t]{.47\linewidth}
%\strut\vspace*{-\baselineskip}\newline
%        \includegraphics[width=0.95\linewidth]{index_size}
%    \end{minipage}%
%    \hfill%
%    \begin{minipage}[t]{.47\linewidth}
%\strut\vspace*{-\baselineskip}\newline
%        \resizebox{0.95\columnwidth}{!}{%
%\begin{tabular}{cccccc}
%\toprule
%$k$   & $m$ & \texttt{next-} & \texttt{buildL} & \texttt{foverlaps} & \texttt{reverse-} \\
%& & \texttt{contained} &  &  & \texttt{complement} \\
%\midrule
%50 & 20 & 1.49 & 5.09 & ~920.63 & 1226.53\\
%50 & 30 & 1.53 & 4.22 & 1003.14 & 1209.02\\
%50 & 40 & 2.00 & 3.38 & 1083.23 & 1226.56\\
%70 & 20 & 1.55 & 6.46 & 1126.44 & 1620.22\\
%70 & 30 & 1.57 & 5.82 & 1206.69 & 1620.78\\
%70 & 40 & 1.54 & 5.26 & 1285.46 & 1621.98\\
%90 & 20 & 1.73 & 8.11 & 1329.24 & 2013.00\\
%90 & 30 & 1.58 & 7.35 & 1408.68 & 2012.36\\
%90 & 40 & 1.56 & 6.67 &1494.98 & 2016.41\\
%110 & 20 & 1.67 & 9.25 & 1530.05& 2411.10\\
%110 & 30 & 1.77 & 8.64 & 1616.05& 2410.03\\
%110 & 40 & 1.64 & 8.10 &1701.20 & 2414.11\\
%\bottomrule
%
%\end{tabular}}
%    \end{minipage}\\[5pt]
%    \begin{minipage}[t]{.47\linewidth}
%        \captionof{figure}{Sizes of the \textit{rBOSS} indexes generated.  Shapes are the different values for $m$.}
%        \label{fig:index_sizes}
%    \end{minipage}%
%    \hfill%
%    \begin{minipage}[t]{.47\linewidth}
%        \captionof{table}{Mean elapsed time, in $\mu$seconds, for the functions proposed in this article.}\label{tab:rboss_funcs}
%    \end{minipage}%
%\end{figure*}

\begin{table}[t]
\centering
\resizebox{\textwidth}{!}{
\begin{tabular}{cc|cccc|ccc}
\toprule
&&\multicolumn{4}{c|}{\textit{rBOSS}}&\multicolumn{3}{c}{\textit{VO-BOSS}}\\
$k$   & $m$ & \texttt{next-} & \texttt{buildL} & \texttt{foverlaps} & \texttt{reverse-} &\texttt{next-}&\texttt{buildL}&\texttt{foverlaps}\\
& & \texttt{contained} &  &  & \texttt{complement}&\texttt{contained}&& \\
\midrule
50 & 20 & 1.49 & 5.09 & 389.42 & 1226.53 &  225.93 & 804.81 & 825.11\\
50 & 30 & 1.53 & 4.22 & 352.41 & 1209.02 &  216.47 & 581.31 & 802.23\\
50 & 40 & 2.00 & 3.38 & 255.02 & 1226.56 &  191.62 & 337.95 & 770.70\\
70 & 20 & 1.55 & 6.46 & 601.94 & 1620.22 &  311.46 & 1614.49 & 1155.22\\
70 & 30 & 1.57 & 5.82 & 546.53 & 1620.78 &  310.74 & 1382.25 & 1115.33\\
70 & 40 & 1.54 & 5.26 & 517.43 & 1621.98 &  297.23 & 1083.36 & 1080.17\\
90 & 20 & 1.73 & 8.11 & 828.12 & 2013.00 &  374.09 & 2441.96 & 1495.37\\
90 & 30 & 1.58 & 7.35 & 768.83 & 2012.36 &  368.71 & 2211.05 & 1444.93\\
90 & 40 & 1.56 & 6.67 & 714.42 & 2016.41 &  372.76 & 1871.19 & 1398.07\\
110 & 20 & 1.67 & 9.25 & 1088.41 & 2411.10 &  429.86 & 3491.07 & 1865.60\\
110 & 30 & 1.77 & 8.64 & 1014.32 & 2410.03 &  428.17 & 3226.45 & 1801.85\\
110 & 40 & 1.64 & 8.10 & ~942.17& 2414.11 &  436.15 & 2965.48 & 1745.31\\
\bottomrule
\end{tabular}}
\caption{Mean elapsed time, in $\mu$seconds, for the functions proposed in this article for both \textit{rBOSS} and \textit{VO-BOSS}.} 
\label{tab:rboss_funcs}
\end{table} 
  
\subparagraph{Genome assembly. \rm
We implemented a genome assembler on top of \textit{rBOSS} to test the usefulness of the data structure. The algorithm is described in Appendix~\ref{assembly}. We used the same E. coli dataset as before, with a minimum value for $k$ of $60$, increasing
it up to $100$ in intervals of $5$ for building the indexes. For each $k$, we selected $5$ values for $m$, from $30$ to $50$, also in intervals of $5$. The results are shown in Figure \ref{fig:assm_stats}; time and space are again linear in $k$. Using $m=30$, our assembler generates contigs in %-1.52 + 0.16k$ minutes 
7--14 minutes and has a memory peak of %$16.91+0.90k$ MB 
70--105 MB, just %$14.64+0.063k$ MB (
18--21 MB on top of the index itself.   \\ }

Figure~\ref{fig:assm_stats}.C compares the quality of the assembly using variable $k$ and {\em rBOSS}, with $m=30$, versus the corresponding assembly generated with a fixed dBG that uses the same $k$. The dBG indexes were built using the \texttt{bcalm} tool~\cite{chikhi2016compacting}. It is clear that the ability to vary the value of $k$ to compute overlapping sequences as we spell the contigs, also called \emph{maximal paths} (MP) in our algorithm, yields an assembly of much higher quality. Figure~\ref{fig:assm_exp} gives further data on the assembly.

\begin{figure}[t]
\centering
\includegraphics[width=\linewidth]{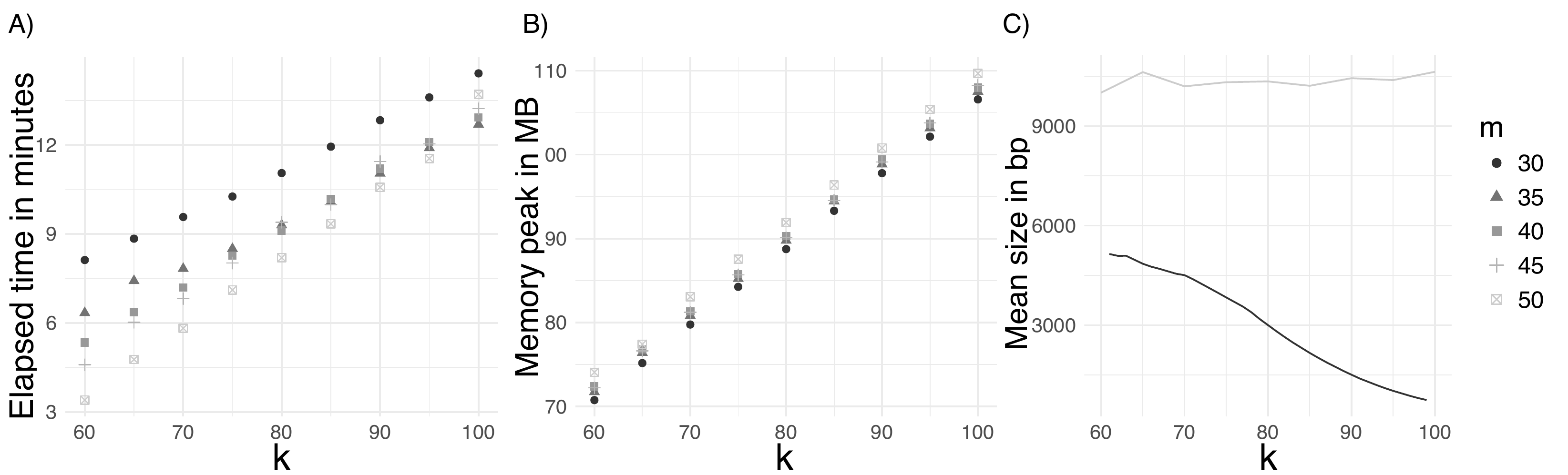}
\caption{Results of genome assembly experiments. A) The elapsed time for the assembly. B) The memory peak achieved during the assembly. C) Comparison of the mean contig size in \textit{rBOSS} (gray line) versus the mean unitig size of a fixed dBG (black line) using the same $k$, both with $m=50$.} 
\label{fig:assm_stats}
\end{figure}

\section{Conclusions and Further Work}

We have introduced {\em rBOSS}, a succinct representation for vo-dBGs (of degree up to $k$) that avoids the $O(\log k)$-bit penalty factor of previous representations thanks to the use of a new structure we call the {\em overlap tree}. This enables the use of $k$ values sufficiently large so as to simulate the full overlap graph, which is an essential tool for genome assembly and other bioinformatic analyses. Our index, for example, can assemble the contigs of 185 MB of 150-base reads, with $k=100$, in less than 15 minutes and within 105 MB\@. 

Our index builds fast, yet using significant space (in our experiment, 6 minutes and 2.5 GB). Future work includes reducing the construction space, even at some increase in construction time. We also aim to reduce the space of $T'$, the most space-demanding component of our index. Preliminary experiments show that the topology of $T'$ is highly repetitive, and that it can be about halved with a grammar-compressed representation~\cite{NO15}. 

%\section{Applications of rBOSS}

The \textit{rBOSS} index can be used for different bioinformatic analyses, not just genome assembly. An example is the detection of single nucleotide polymorphisms. Polymorphisms are usually inferred by first aligning a multiset\footnote{A sequencing data set generated from the DNA of several individuals from the same specie.} of reads to a reference genome and then looking for mismatches between the aligned reads and the genome. This approach is often expensive as it requires much preprocessing. As an alternative, we can build a \emph{colored} version of the \textit{rBOSS} index, that is, we color the reads according to the individual they were generated from, and then search for every read $x$ that meets the following criteria: i) two or more overlaps with heavy weights, ii) two or more colors, and iii) the overlapping reads share one or more colors with $x$, but not among them. Reads meeting these criteria (and their overlapping sequences) are candidates to map polymorphic sites in the genome. We can then align them to the reference genome and check the sequencing quality of their characters to be sure. This idea for inferring SNPs is similar to the one described in \cite{iqbal2012novo}. 

Another possible application is sequencing error correction. In this case, we search for reads whose overlaps have small weights. If for a particular read $x$, all its forward and backward overlaps have very small weights, say $<2$, then it is reasonable to assume that $x$ contains errors, especially if the sequencing qualities of its characters are low.

\bibliography{rboss}

\begin{thebibliography}{10}

\bibitem{abouelhoda2004}
Mohamed~Ibrahim Abouelhoda, Stefan Kurtz, and Enno Ohlebusch.
\newblock Replacing suffix trees with enhanced suffix arrays.
\newblock {\em Journal of Discrete Algorithms}, 2(1):53--86, 2004.

\bibitem{bankevich2012spades}
Anton Bankevich, Sergey Nurk, Dmitry Antipov, Alexey~A Gurevich, Mikhail
  Dvorkin, Alexander~S Kulikov, Valery~M Lesin, Sergey~I Nikolenko, Son Pham,
  Andrey~D Prjibelski, et~al.
\newblock {SPA}des: A new genome assembly algorithm and its applications to
  single-cell sequencing.
\newblock {\em Journal of Computational Biology}, 19(5):455--477, 2012.

\bibitem{belazzougui2014linear}
Djamal Belazzougui.
\newblock Linear time construction of compressed text indices in compact space.
\newblock In {\em Proc. 46th Annual Symposium on the Theory of Computing
  (STOC)}, pages 148--193, 2014.

\bibitem{bonizzoni2014constructing}
Paola Bonizzoni, Gianluca Della~Vedova, Yuri Pirola, Marco Previtali, and
  Raffaella Rizzi.
\newblock Constructing string graphs in external memory.
\newblock In {\em Proc. 14th International Workshop on Algorithms in
  Bioinformatics (WABI)}, pages 311--325, 2014.

\bibitem{BBGPS15}
Christina Boucher, Alexander Bowe, Travis Gagie, Simon~J Puglisi, and Kunihiko
  Sadakane.
\newblock Variable-order de {B}ruijn graphs.
\newblock In {\em Proc. 25th Data Compression Conference (DCC)}, pages
  383--392, 2015.

\bibitem{BOSS12}
Alexander Bowe, Taku Onodera, Kunihiko Sadakane, and Tetsuo Shibuya.
\newblock Succinct de {B}ruijn graphs.
\newblock In {\em Proc. 12th International Workshop on Algorithms in
  Bioinformatics (WABI)}, pages 225--235, 2012.

\bibitem{bray2016near}
Nicolas~L Bray, Harold Pimentel, P{\'a}ll Melsted, and Lior Pachter.
\newblock Near-optimal probabilistic {RNA}-seq quantification.
\newblock {\em Nature Biotechnology}, 34(5):525, 2016.

\bibitem{BW94}
M.~Burrows and D.~Wheeler.
\newblock A block sorting lossless data compression algorithm.
\newblock Technical Report 124, Digital Equipment Corporation, 1994.

\bibitem{chikhi2016compacting}
Rayan Chikhi, Antoine Limasset, and Paul Medvedev.
\newblock Compacting de {B}ruijn graphs from sequencing data quickly and in low
  memory.
\newblock {\em Bioinformatics}, 32(12):201--208, 2016.

\bibitem{Cla96}
David Clark.
\newblock {\em Compact {PAT} Trees}.
\newblock PhD thesis, University of Waterloo, Canada, 1996.

\bibitem{de1946combinatorial}
Nicolaas~Govert De~Bruijn.
\newblock A combinatorial problem.
\newblock {\em Koninklijke Nederlandse Akademie v. Wetenschappen},
  49(49):758--764, 1946.

\bibitem{denisov2008consensus}
Gennady Denisov, Brian Walenz, Aaron~L Halpern, Jason Miller, Nelson Axelrod,
  Samuel Levy, and Granger Sutton.
\newblock Consensus generation and variant detection by {C}elera assembler.
\newblock {\em Bioinformatics}, 24(8):1035--1040, 2008.

\bibitem{ferragina2005indexing}
Paolo Ferragina and Giovanni Manzini.
\newblock Indexing compressed text.
\newblock {\em Journal of the {ACM}}, 52(4):552--581, 2005.

\bibitem{gbmp2014sea}
Simon Gog, Timo Beller, Alistair Moffat, and Matthias Petri.
\newblock From theory to practice: Plug and play with succinct data structures.
\newblock In {\em Proc. 13th International Symposium on Experimental Algorithms
  (SEA)}, pages 326--337, 2014.

\bibitem{grossi2003}
Roberto Grossi, Ankur Gupta, and Jeffrey~Scott Vitter.
\newblock High-order entropy-compressed text indexes.
\newblock In {\em Proc. 14th Annual ACM-SIAM Symposium on Discrete Algorithms
  (SODA)}, pages 841--850, 2003.

\bibitem{iqbal2012novo}
Zamin Iqbal, Mario Caccamo, Isaac Turner, Paul Flicek, and Gil McVean.
\newblock De novo assembly and genotyping of variants using colored de {B}ruijn
  graphs.
\newblock {\em Nature Genetics}, 44(2):226, 2012.

\bibitem{karkkainen2006linear}
Juha K{\"a}rkk{\"a}inen, Peter Sanders, and Stefan Burkhardt.
\newblock Linear work suffix array construction.
\newblock {\em Journal of the ACM}, 53(6):918--936, 2006.

\bibitem{kasai2001linear}
Toru Kasai, Gunho Lee, Hiroki Arimura, Setsuo Arikawa, and Kunsoo Park.
\newblock Linear-time longest-common-prefix computation in suffix arrays and
  its applications.
\newblock In {\em Proc. 12th Annual Symposium on Combinatorial Pattern Matching
  (CPM)}, pages 181--192, 2001.

\bibitem{kim2003linear}
Dong~Kyue Kim, Jeong~Seop Sim, Heejin Park, and Kunsoo Park.
\newblock Linear-time construction of suffix arrays.
\newblock In {\em Proc. 14th Annual Symposium on Combinatorial Pattern Matching
  (CPM)}, pages 186--199, 2003.

\bibitem{ko2005space}
Pang Ko and Srinivas Aluru.
\newblock Space efficient linear time construction of suffix arrays.
\newblock {\em Journal of Discrete Algorithms}, 3(2-4):143--156, 2005.

\bibitem{li2015megahit}
Dinghua Li, Chi-Man Liu, Ruibang Luo, Kunihiko Sadakane, and Tak-Wah Lam.
\newblock {MEGAHIT}: an ultra-fast single-node solution for large and complex
  metagenomics assembly via succinct de {B}ruijn graph.
\newblock {\em Bioinformatics}, 31(10):1674--1676, 2015.

\bibitem{Li2012}
Heng Li.
\newblock wgsim --- read simulator for next generation sequencing.
\newblock {\em Bioinformatics}, 28:593--594, 2012.

\bibitem{MN05}
V.~M{\"a}kinen and G.~Navarro.
\newblock Succinct suffix arrays based on run-length encoding.
\newblock {\em Nordic Journal of Computing}, 12(1):40--66, 2005.

\bibitem{makinen2015genome}
Veli M{\"a}kinen, Djamal Belazzougui, Fabio Cunial, and Alexandru~I Tomescu.
\newblock {\em Genome-Scale Algorithm Design}.
\newblock Cambridge University Press, 2015.

\bibitem{myers2005fragment}
Eugene~W Myers.
\newblock The fragment assembly string graph.
\newblock {\em Bioinformatics}, 21(2):79--85, 2005.

\bibitem{navarro2016compact}
Gonzalo Navarro.
\newblock {\em Compact Data Structures: A Practical Approach}.
\newblock Cambridge University Press, 2016.

\bibitem{navarro2007compressed}
Gonzalo Navarro and Veli M{\"a}kinen.
\newblock Compressed full-text indexes.
\newblock {\em ACM Computing Surveys}, 39(1):article 2, 2007.

\bibitem{NO15}
Gonzalo Navarro and Alberto Ord{\'o}{\~n}ez.
\newblock Faster compressed suffix trees for repetitive collections.
\newblock {\em ACM Journal of Experimental Algorithmics}, 21(1):article 1.8,
  2016.

\bibitem{NS14}
Gonzalo Navarro and Kunihiko Sadakane.
\newblock Fully-functional static and dynamic succinct trees.
\newblock {\em ACM Transactions on Algorithms}, 10(3):article 16, 2014.

\bibitem{okanohara2009linear}
Daisuke Okanohara and Kunihiko Sadakane.
\newblock A linear-time {B}urrows-{W}heeler {T}ransform using induced sorting.
\newblock In {\em Proc. 16th International Symposium on String Processing and
  Information Retrieval (SPIRE)}, pages 90--101, 2009.

\bibitem{Peng2010}
Yu~Peng, Henry~CM Leung, Siu-Ming Yiu, and Francis~YL Chin.
\newblock {IDBA}-a practical iterative de {B}ruijn graph de novo assembler.
\newblock In {\em Proc. 14th Annual International Conference on Research in
  Computational Molecular Biology (RECOMB)}, pages 426--440, 2010.

\bibitem{raman2007}
Rajeev Raman, Venkatesh Raman, and Srinivasa~Rao Satti.
\newblock Succinct indexable dictionaries with applications to encoding k-ary
  trees, prefix sums and multisets.
\newblock {\em ACM Transactions on Algorithms}, 3(4):article 43, 2007.

\bibitem{simpson2010efficient}
Jared~T Simpson and Richard Durbin.
\newblock Efficient construction of an assembly string graph using the
  {FM}-index.
\newblock {\em Bioinformatics}, 26(12):367--373, 2010.

\bibitem{simpson2012efficient}
Jared~T Simpson and Richard Durbin.
\newblock Efficient de novo assembly of large genomes using compressed data
  structures.
\newblock {\em Genome Research}, 22(3):549--556, 2012.

\bibitem{zerbino2008velvet}
Daniel Zerbino and Ewan Birney.
\newblock Velvet: algorithms for de novo short read assembly using de {B}ruijn
  graphs.
\newblock {\em Genome {R}esearch}, 18(3):821--829, 2008.

\bibitem{zimin2013masurca}
Aleksey~V Zimin, Guillaume Mar{\c{c}}ais, Daniela Puiu, Michael Roberts,
  Steven~L Salzberg, and James~A Yorke.
\newblock The {MaSuRCA} genome assembler.
\newblock {\em Bioinformatics}, 29(21):2669--2677, 2013.

\end{thebibliography}
\newpage

\appendix

\setcounter{table}{0}
\setcounter{figure}{0}
\renewcommand{\thetable}{A\arabic{table}}
\renewcommand\thefigure{A.\arabic{figure}}

\section{Figures}

\begin{figure}[ht]
\centering
\includegraphics[width=0.9\linewidth]{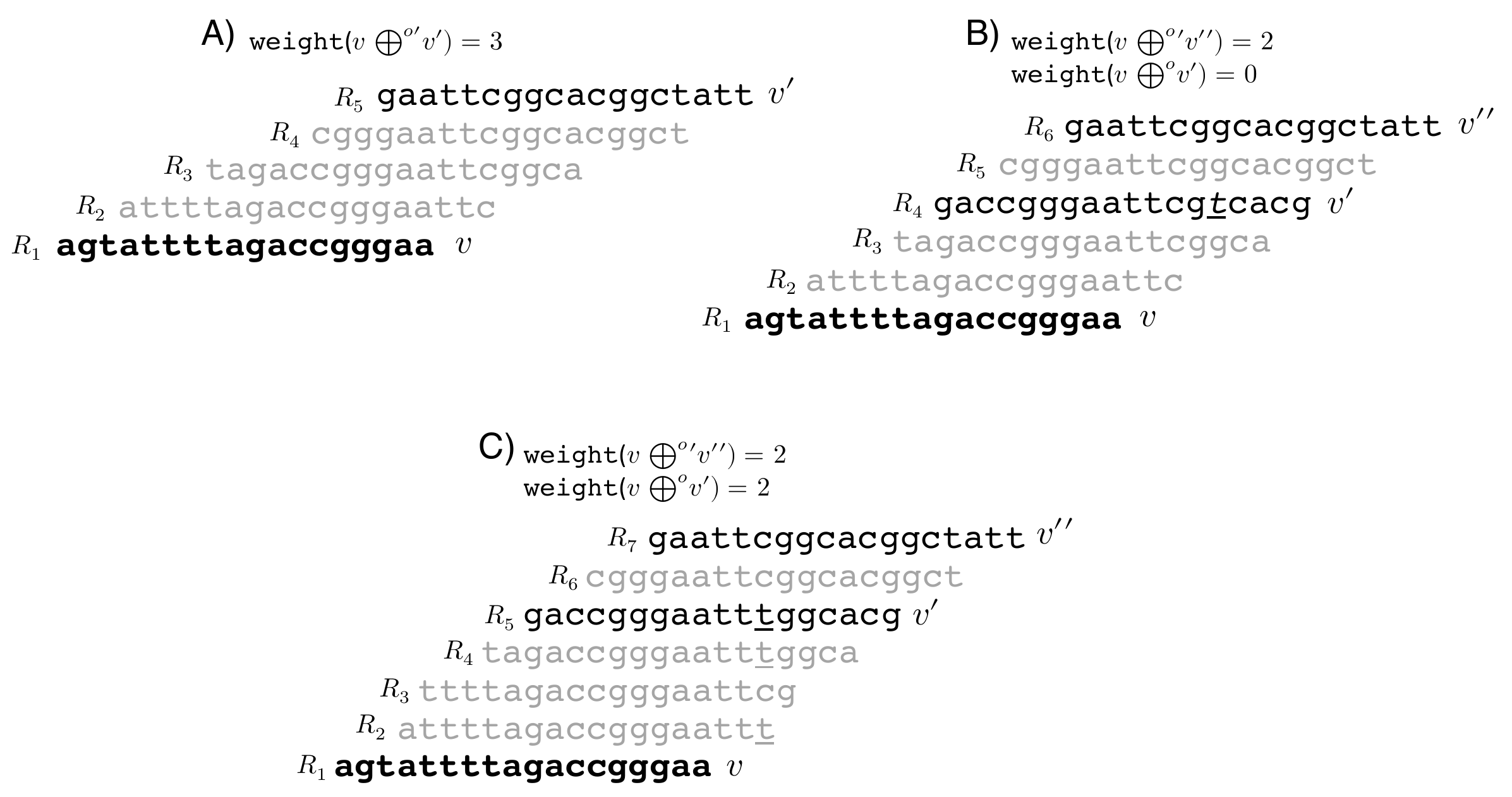}
\caption{Different cases in which the weights of the irreductible overlaps can detect patterns in the data. A) sequence $R_1$ (dBG node $v$) has only one irreductible overlap, with $R_5$ (dBG node $v'$), and there are 3 transitive overlaps between them ($R_2$,$R_3$ and $R_4$ in gray). In this case, there is enough evidence (transitive overlaps) to infer that the string formed by the union of $R_1$ and $R_5$ exists in the input DNA. B) sequence $R_1$ has 2 irreductible overlaps, with $R_4$ and $R_6$ (dBG nodes $v'$ and $v''$ respectively), but the number of unique transitive overlaps between $R_1$ and $R_4$ is zero (\texttt{weight}$(v\oplus^{o}v')=0$), so the most probable option is that $R_4$ contains a sequencing error (italic underlined symbol). C) $R_1$ has two irreductible overlaps, with $R_5$ and $R_7$, and both overlaps have a weight of 2. In this circumstance, it is more probable that $R_1$ belongs to a repeated region or it is next to a genetic variation, if the reads $R_5$ and $R_7$ come from different individuals.} 
\label{fig:rboss_weights}
\end{figure}

\begin{figure}[hbt!]
\centering
\includegraphics[width=\linewidth]{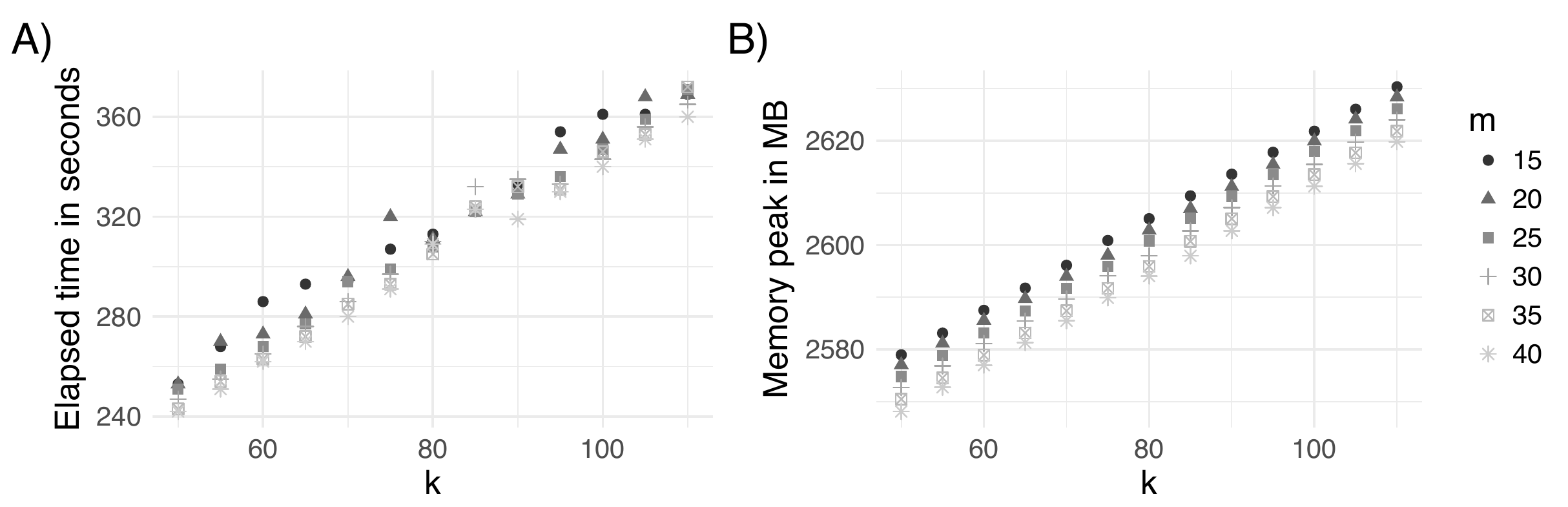}
\caption{Statistics about the construction of \textit{rBOSS}. In all the plots, the x-axis represents the different values used for $k$, and every shape represents a particular value of $m$. The y-axis in A) is the mean elapsed time; in B) it is the memory peak achieved during the construction.} 
\label{fig:rboss_stats}
\end{figure}

\begin{figure}[H]
\centering
\includegraphics[width=0.75\linewidth]{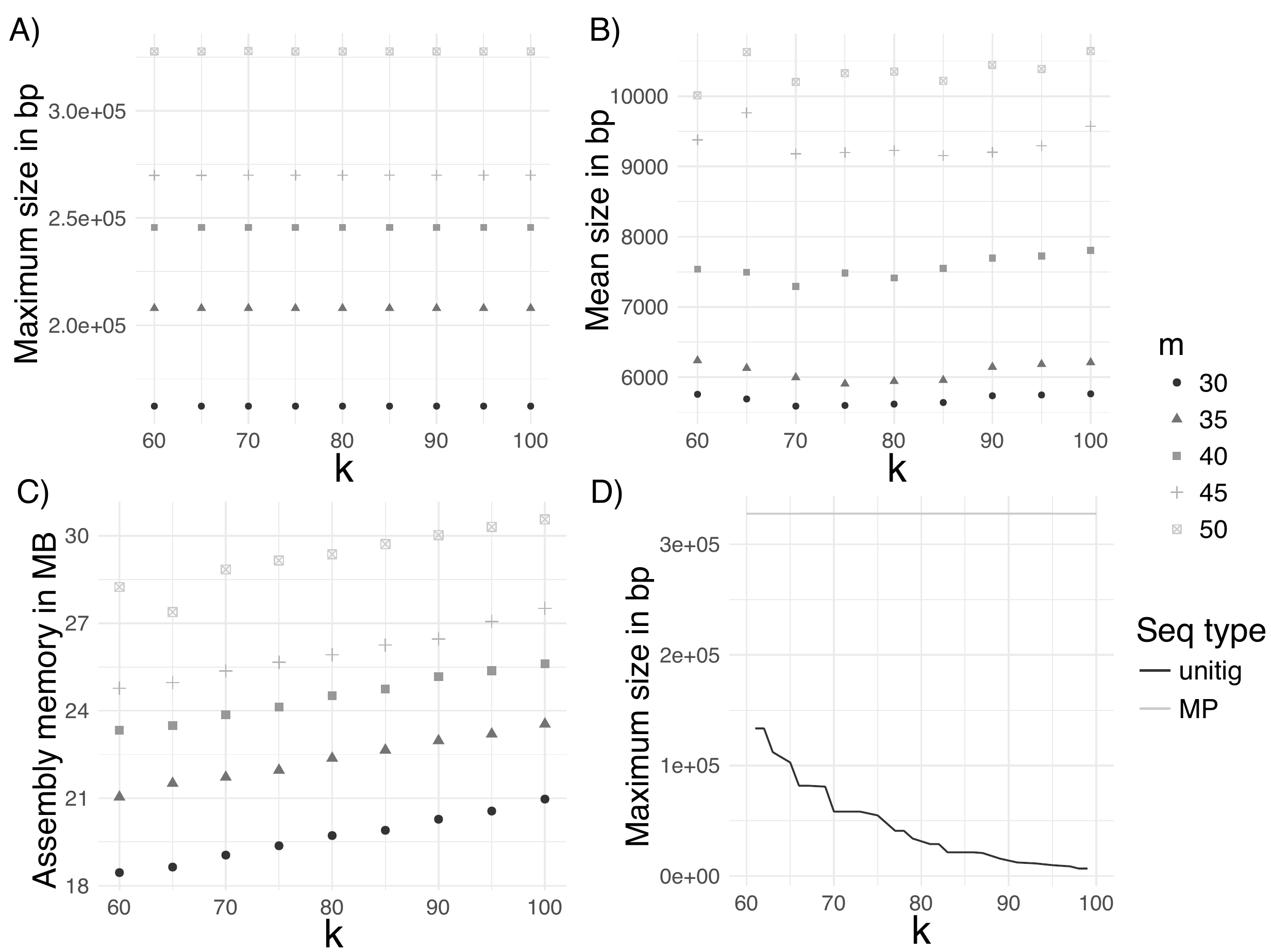}
\caption{Other results of the assembly experiments. In all the figures, the x-axis are the values for $k$. In figures A), B), and C), the shapes are the values of $m$. A) Maximum length achieved for a contig in each \textit{rBOSS} index. B) Mean size for contigs in every \textit{rBOSS} index. C) Difference between the memory peak and the size of the index, that is, the memory used exclusively for assembly. D) Comparison of the longest contig in \textit{rBOSS} (gray line) versus the longest unitig of a fixed dBG (black line) using the same $k$, both with $m=50$.} 
\label{fig:assm_exp}
\end{figure}

%\begin{figure}[h]
%\centering
%\includegraphics[width=0.8\linewidth]{indexes_comp.pdf}
%\caption{Index size comparison between \textit{rBOSS} and \textit{VO-BOSS}. The x-axis gives the values of $k$ and the y-axis is the size of the index in MB. All the \textit{rBOSS} indexes were built with $m=30$.} 
%\label{fig:comparison}
%\end{figure}

\section{Tables}
\begin{table}[H]
\begin{tabular}{cccccccc}
\toprule
\textbf{k}   & \textbf{m}  & \textbf{dBG nodes} & \textbf{solid nodes} & \textbf{linker nodes} & \textbf{edges}  & \textbf{tree nodes} & \textbf{tree int nodes} \\
\midrule
50 & 20 & 48.86 & 9.13 & 39.73 & 50.92 & 78.78 & 29.92 \\
50 & 30 & 48.86 & 9.13 & 39.73 & 50.92 & 68.47 & 19.61 \\
50 & 40 & 48.86 & 9.13 & 39.73 & 50.92 & 58.15 & 9.29 \\
70 & 20 & 69.52 & 9.14 & 60.38 & 71.58 & 120.09 & 50.57 \\
70 & 30 & 69.52 & 9.14 & 60.38 & 71.58 & 109.78 & 40.26 \\
70 & 40 & 69.52 & 9.14 & 60.38 & 71.58 & 99.46 & 29.94 \\
90 & 20 & 90.18 & 9.14 & 81.04 & 92.24 & 161.40 & 71.22 \\
90 & 30 & 90.18 & 9.14 & 81.04 & 92.24 & 151.08 & 60.91 \\
90 & 40 & 90.18 & 9.14 & 81.04 & 92.24 & 140.76 & 50.59 \\
110 & 20 & 110.79 & 9.09 & 101.69 & 112.84 & 202.61 & 91.82 \\
110 & 30 & 110.79 & 9.09 & 101.69 & 112.84 & 192.30 & 81.51 \\
110 & 40 & 110.79 & 9.09 & 101.69 & 112.84 & 181.98 & 71.19 \\ 
\bottomrule
\end{tabular}
\caption{Statistics about the different instances of the dBG graphs generated in the experiments. Except for the first and second column, all the values are expressed in millions. Columns one and two are the values used for $k$ and $m$, respectively, in the \textit{rBOSS} index. Column three contains the total number of dBG nodes at order $k$. Column four show the number of solid nodes and column five the number of linker nodes. Column six is the total number of edges in the dBG (number of symbols in $E$). Column seven is the total number of nodes in the overlap tree and column eight is the number of internal nodes in the overlap tree.} 
\label{tab:rboss_stats}
\end{table}

\section{Pseudocodes} \label{pseudo}

\begin{algorithm}[H]
\caption{Build set $L$ with the linker nodes contained by $v$}\label{alg:voboss_ovp}
\label{a1}
\begin{algorithmic}[1]
\Procedure{{\rm\tt nextcontained}}{$v, m$}
\State$d \gets v.order-1$
\While{$d \geq m$}\label{while_c}
\State$u \gets \mathtt{shorter}(v,d)$\label{shorter}
\If{ $\range{u}\supset\range{v}$}\label{maximal}
\If{$isLinker(v')$ \textbf{and} $|\mathtt{llabel}(v')|=d$}\label{lemma}
\State\textbf{return} $u$ 
\EndIf
\State$v\gets u$
\EndIf
\State$d\gets d-1$
\EndWhile\label{euclidendwhile}
\State\textbf{return} $0$ \Comment{dummy node}
\EndProcedure

\Procedure{{\rm \tt buildL}}{$v,m$}\Comment{$v$ is a vo-dBG node and $m$ is the minimum suffix size}
\State$L \gets \emptyset$
\State$c \gets \mathtt{nextcontained}(v, m)$
\While{$c$>0}\label{while_o}
\State$L \gets L \cup \{\range{c}[1]\}$
\State$c \gets \mathtt{nextcontained}(c, m)$
\EndWhile
\State\textbf{return} $L$
\EndProcedure
\end{algorithmic}
\end{algorithm}

\begin{algorithm}[H]
\caption{Function \texttt{nextcontained} implemented with the topology of $T'$}\label{alg:con2}
\label{a2}
\begin{algorithmic}[1]
\Procedure{{\rm\tt nextcontained}}{$v$}\Comment{$v$ is a vo-dBG node at order $K-1$}
\State$t \gets leafselect(T', v)$\Comment{node in $T'$ mapping $v$}
\State$t' \gets parent(T', t)$
\If{$firstchild(T',t')=t$}\Comment{$t$ is already the leftmost sibling} 
\State$t' \gets parent(T', t')$
\EndIf 
\State$l \gets lchild(T',t')$
\State\textbf{return} $leafrank(T', l)$\Comment{vo-dBG node mapping $l$}
\EndProcedure
\end{algorithmic}
\end{algorithm}

\section{Building \textit{rBOSS}}\label{building}

To build our data structure, we first form the string $R=R_1\$R_1^{rc}..R_r\$R_r^{rc}\#$ over the alphabet $\Sigma \cup \{\$,\#\}$, with size $n'=|R|$, and that represents the concatenation of the reads in $\mathcal{R}^*$. In $R$, symbol $\#$ is the least in lexicographical order. Next, we build the $SA$, $BWT$ and $LCP$ arrays for $\overline{R}$, the reversal of $R$. We use $\overline{R}$ instead of $R$ because the $BWT$ ($E$ in  \textit{BOSS}) contains the symbols to the left of every suffix (node labels in  \textit{BOSS}), but we actually need the symbol to the right when we call \texttt{forward}. After building these arrays, we modify $LCP$ to simulate the padding of the dummy symbols: for every $LCP[i]$, we compute the distance $d$ between $SA[i]$ and the position in $\overline{R}$ of the next occurrence of symbol \$ after $SA[i]$. If $d<k-1$ and $d<LCP[i]$, then we set $LCP[i]=k-1$. To compute $d$ in constant time, we can generate a bitmap $D$, with \emph{rank} and \emph{select} support (see Section~\ref{rs}), that marks in $\overline{R}$ the position of every \$. Thus, $d$ is computed as $select_1(D, rank_{1}(D, SA[i])+1)- SA[i]$. 

The next step is to build $E$. To this end, we traverse $BWT[i]$ for increasing $i$ as long as $LCP[i]\geq k-1$, and in the process we mark in a bitmap $S[1..\sigma]$ the symbols seen so far. When $LCP[i]<k-1$, we append the marked symbols of $S$ to $E$, and also append the same number of bits to $B$, all zeros except the last in each step. We then reset $S$ and restart the traversal of $BWT$.

The final step is to build $T'$. We use an algorithm~\cite{belazzougui2014linear} to build the topology in $BP$ of a tree, modified to discard on the fly the unnecessary nodes. We first compute the virtual suffix tree $ST$ from $LCP$~\cite{abouelhoda2004}, and then traverse it in preorder. For each node $v$, we write an opening parenthesis if it satisfies the restrictions, then we recursively traverse its children, and finally write a closing parenthesis if $v$ satisfied the conditions. 

The $SA$ for $\overline{R}$ can be built in linear time~\cite{karkkainen2006linear, ko2005space, kim2003linear}, and so can $BWT$,~\cite{okanohara2009linear} $LCP$~\cite{kasai2001linear}, and the virtual $ST$~\cite{abouelhoda2004}. Our modifications are obviously linear-time, and therefore the \textit{rBOSS} structure can be built in linear time as well.

\subsection{Rank and select data structures}\label{rs}
\emph{Rank} and \emph{select} dictionaries are fundamental in most succinct data structures. Given a sequence $B[1..n]$ of elements over the alphabet $\Sigma=[1..\sigma]$, $B.\texttt{rank}_{b}(i)$ with $i \in [1..n]$ and $b\in\Sigma$, returns the number of times the element $b$ occurs in $B[1..i]$, while $B.\texttt{select}_b(i)$ returns the position of the $i$th occurrence of $b$ in $B$. For binary alphabets, $B$ can be represented in $n+o(n)$ bits so that \texttt{rank} and \texttt{select} are solved in constant time \cite{Cla96}. When $B$ has $m \ll n$ 1s, a compressed representation using $m\lg\frac{n}{m}+O(m)+o(n)$ bits, still solving the operations in constant time, is of interest \cite{raman2007}. This space is $o(n)$ if $m=o(n)$.

\section{Genome assembly} \label{assembly}

In this section we briefly describe how to use \textit{rBOSS} to assemble a genome. We define some concepts first.

\begin{lemma}
A solid node $v$ is right-extensible (RE) (respectively left-extensible (LE)) if (i) it is a non-s-node with outdegree 1 (respectively, a non-p-node with indegree 1) or (ii) it is an s-node with outdegree $\leq 1$ (respectively p-node with indegree $\leq 1$) and following its outgoing edge (if outdegree is 1) and the outgoing edges of every $l \in L_v$ (respectively its incoming edge, if indegree is 1, and the incoming edges of its backward overlaps) leads to a unique solid node $v'$ in at most $(k-1)-m$ forward operations.
\end{lemma}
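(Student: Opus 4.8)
The statement is, in substance, a \emph{definition} of right- and left-extensibility; accordingly the task of a proof is to check that it is well-posed — that the ``unique solid node $v'$'' it mentions is genuinely determined and is reached within the advertised budget of $(k-1)-m$ forward operations — and, as a by-product that is useful for the assembler of Appendix~\ref{assembly}, that the predicate it defines can be tested in $\mathcal{O}(k\log\sigma)$ time. The plan is to treat the two clauses in turn.

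For clause~(i) there is essentially nothing to do: a non-s-node with outdegree $1$ has, by definition of the dBG, a single outgoing edge, so $v'=\texttt{forward}(v,a)$ for the unique label $a$, and $v'$ is solid because the length-$k$ string $v\cdot a$ occurs in $\mathcal{R}^*$; it is reached in one forward step, and $1\le(k-1)-m$ since $m<k-2$. For clause~(ii) I would build $L_v=\texttt{buildL}(v,m)$ using Theorem~\ref{t2}; by construction every $l\in L_v$ is a linker node whose label has the shape $\$^{\,k-1-d}\,w$, where $w=\texttt{llabel}(l)$ is a suffix of $v$ of length $d$ with $m\le d\le k-2$. Each \texttt{forward} step along a branch out of $l$ removes the leading \$ and appends one real symbol of a read that $w$ prefixes, so after exactly $k-1-d$ steps (and never earlier) the label is \$-free and equals a length-$(k-1)$ read prefix, i.e.\ a solid p-node; and $k-1-d\le(k-1)-m$. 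The direct outgoing edge of $v$, if $\texttt{outdegree}(v)=1$, reaches a solid node in one step as in clause~(i). Hence the predicate ``$v$ is RE'' asserts that each of these at most $|L_v|+1$ explorations, truncated after $(k-1)-m$ steps, first meets one and the same solid node $v'$; if two of them meet different solid nodes, or if one of them reaches no solid node within the budget, then $v$ is not RE. So uniqueness of $v'$ is part of what is asserted rather than something forced, the cutoff $(k-1)-m$ is exactly the worst-case distance to a p-node computed above, and the definition is therefore internally consistent; combining the $\mathcal{O}(k)$ cost of \texttt{buildL} with the $\mathcal{O}(k\log\sigma)$ cost of the forward traversals gives the decision time, and each \texttt{label}$(v\oplus^{o}v')$ produced in this way is safe in the sense of Section~\ref{defs} because $o\ge m$.

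Left-extensibility I would not argue from scratch: by Lemma~\ref{lrc} the incoming edges and backward overlaps of $v$ are precisely the DNA complements of the outgoing edges and forward overlaps of $v^{rc}$, so ``$v$ is LE'' amounts to ``$v^{rc}$ is RE'' up to complementation, and Theorems~\ref{theo:rev_comp} and~\ref{theo:fovp} let me move between $v$ and $v^{rc}$ without changing the asymptotics. The only point that needs genuine care is the budget $(k-1)-m$: one must verify that, whichever read a linker $l$ prefixes, the first \$-free length-$(k-1)$ window — equivalently the first p-node — on any branch out of $l$ is reached after exactly $k-1-|\texttt{llabel}(l)|$ forward operations, using that \texttt{buildL} returns only linkers with $|\texttt{llabel}|\ge m$ and that all reads have the same length; once this is pinned down, the rest is bookkeeping.
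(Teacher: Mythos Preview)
You correctly identify that the statement is a \emph{definition}, not a theorem with a proof; the paper gives no proof for this lemma and immediately moves on to the subsequent theorem on the cost of deciding RE/LE. Your well-posedness analysis --- that each linker $l$ with $|\texttt{llabel}(l)|=d$ reaches a solid p-node after exactly $k-1-d\le(k-1)-m$ \texttt{forward} steps, and that LE reduces to RE on $v^{rc}$ via Lemma~\ref{lrc} --- is sound and in fact more explicit than anything the paper says here.

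There is, however, a concrete slip in your complexity claim. You assert that the predicate can be tested in $\mathcal{O}(k\log\sigma)$ time, but the algorithm you yourself describe performs up to $|L_v|+1$ independent explorations, each of length up to $(k-1)-m$, each step costing $\mathcal{O}(\log\sigma)$. That is $\mathcal{O}(k+|L_v|(k-m)\log\sigma)$, which is exactly the bound the paper states in its Theorem~\ref{lem:extensible} and which can be $\Theta(k^2\log\sigma)$ since $|L_v|$ may be as large as $k-1-m$. The paper's proof of that theorem uses a slightly different mechanism --- it advances all elements of $L_v$ in lock-step, checking at each batch that all outgoing labels agree --- but arrives at the same bound; neither your per-branch exploration nor the paper's batched advance achieves $\mathcal{O}(k\log\sigma)$. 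If you intended to piggy-back on the $\mathcal{O}(k\log\sigma)$ \texttt{foverlaps} trick of Theorem~\ref{theo:fovp}, note that \texttt{foverlaps} reports the \emph{set} of overlapping p-nodes but does not by itself certify that every $l\in L_v$ (and the direct edge of $v$) funnels into the \emph{same} solid node, which is what RE demands.
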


\begin{theorem} \label{lem:extensible}
Computing if a solid node $v$ is RE has $\mathcal{O}(k + |L_v|(k-m)\log \sigma)$ worst case time complexity. Computing if $v$ is $LE$ also has $\mathcal{O}(k + |L_v|(k-m)\log \sigma)$ time complexity if we augment rBOSS with $s\log s$ bits, where $s$ is the number of solid nodes. 
\end{theorem}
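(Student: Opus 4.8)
The plan is to handle right-extensibility (RE) first, turning the preceding definition into an algorithm over the \textit{BOSS} and overlap-tree primitives, and then to reduce the left-extensibility (LE) test to an RE test on the reverse complement. For RE I would start by deciding which of the two cases of the definition applies: whether $v$ is an s-node is decided in $\mathcal{O}(1)$ using markers precomputed during construction, and $\texttt{outdegree}(v)$ costs $\mathcal{O}(\log\sigma)$ or less. If $v$ is a non-s-node, then by case (i) it is RE exactly when $\texttt{outdegree}(v)=1$, so the test finishes in $\mathcal{O}(\log\sigma)\subseteq\mathcal{O}(k)$; that single edge leads to a node of length $k-1$ whose label occurs in $\mathcal{R}^*$, hence the unique solid node reached.

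If instead $v$ is an s-node, only case (ii) can make it RE: if $\texttt{outdegree}(v)\ge 2$ it is not RE, and otherwise I would run the convergence test. First build $L_v=\texttt{buildL}(v,m)$ in $\mathcal{O}(k)$ time by Theorem~\ref{t2}. If $\texttt{outdegree}(v)=1$, take the target of $v$'s outgoing edge (one $\texttt{forward}$ step, $\mathcal{O}(\log\sigma)$) as a first candidate solid node $v'$. Then, for each linker $l\in L_v$, follow a forward path out of $l$ one $\texttt{forward}$ at a time, testing the solid-node marker after every step and stopping at the first solid node reached; since each step removes one dummy symbol from the label, this happens within $(k-1)-|\texttt{llabel}(l)|\le (k-1)-m$ steps. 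If along such a path a node of outdegree $\ne 1$ is met before a solid node, or if the solid node finally reached disagrees with the current candidate, the algorithm reports that $v$ is not RE; otherwise all paths converge to a single node $v'$ and $v$ is RE. The cost is $\mathcal{O}(k)$ for \texttt{buildL}, one $\texttt{forward}$ for $v$'s own edge (which already has length $k-1$), and $\mathcal{O}((k-m)\log\sigma)$ per linker path ($\texttt{forward}$ plus an $\mathcal{O}(1)$ marker check and solid-node comparison per step), i.e.\ $\mathcal{O}(k+|L_v|(k-m)\log\sigma)$ in total, collapsing to $\mathcal{O}(k)$ when $L_v=\emptyset$. Correctness is immediate from the definition: $v'$ is precisely the unique solid node that every admissible forward continuation of $v$ and of its contained linkers reaches within $(k-1)-m$ steps, and no such $v'$ exists once two continuations branch or disagree.

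For LE I would use Lemma~\ref{lrc} together with the remark following Theorem~\ref{theo:fovp}: the incoming symbols and nodes of $v$ are the DNA complements of the outgoing symbols and nodes of $v^{rc}$, the backward overlaps of $v$ are the reverse complements of the forward overlaps of $v^{rc}$, and $v$ is a p-node iff $v^{rc}$ is an s-node. Hence $v$ is LE iff $v^{rc}$ is RE, the budget of $(k-1)-m$ backward operations for $v$ becoming the same budget of forward operations for $v^{rc}$. Using the $s\log s$-bit permutation of Theorem~\ref{theo:rev_comp} I would locate the \textit{BOSS} row of $v^{rc}$ in $\mathcal{O}(1)$ and then run the RE test above on $v^{rc}$, obtaining $\mathcal{O}(k+|L_{v^{rc}}|(k-m)\log\sigma)$, which is the asserted bound with $|L_v|$ understood as $|L_{v^{rc}}|$, the number of linkers contained in $v^{rc}$ (equivalently, the backward-overlap linkers of $v$). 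Without the permutation, locating $v^{rc}$ alone already costs $\mathcal{O}(k\log\sigma)$ by Theorem~\ref{theo:rev_comp} and need not be dominated by the rest, which explains why the extra bits are needed for the stated bound.

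I expect the main obstacle to be pinning down the traversal so that it provably stays polynomial and faithful to the definition: arguing that following \emph{unary} forward paths and aborting at the first branch is both sufficient and correct, that $(k-1)-m$ forward steps always carry a contained linker with $\texttt{llabel}$ of length $\ge m$ onto a solid node, and that the symmetric statement for backward paths and LE follows verbatim through the reverse-complement reduction. Everything else is bookkeeping over primitives whose costs are already established in Theorems~\ref{t2}, \ref{theo:rev_comp}, and~\ref{theo:fovp}.
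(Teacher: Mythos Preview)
Your proposal is correct and follows essentially the same decomposition as the paper: dispose of the non-s-node case by an outdegree check, build $L_v$ in $\mathcal{O}(k)$ via the overlap tree (Theorem~\ref{t2}), spend $\mathcal{O}((k-m)\log\sigma)$ per linker on forward traversal, and reduce LE to RE on $v^{rc}$ using the $s\log s$-bit permutation of Theorem~\ref{theo:rev_comp}.

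The one substantive difference is in how the linker traversals are organised. You follow each linker's forward path independently to its first solid node and then compare the endpoints against a running candidate $v'$. The paper instead advances all elements of $L_v$ in lockstep as a queue: in each round it checks that every current element has outdegree $1$ and that all outgoing labels coincide, applies \texttt{forward} to every element with that common symbol, removes the head when its outdegree drops to $0$, and repeats until the queue empties. Both schemes perform at most $|L_v|(k-m)$ \texttt{forward} operations, so the bound is the same either way. The batch formulation dovetails with the contig-spelling procedure later in Appendix~\ref{assembly}, which needs the single agreed symbol at each step; your endpoint comparison is the literal reading of ``leads to a unique solid node $v'$'' in the definition. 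These are dual views of the same test and neither affects the complexity claim.
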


\begin{proof}
When $v$ is not an s-node, testing if it is RE reduces to checking its outdegree. The other case, when $v$ is a s-node, is harder. First compute $L_v=\texttt{buildL($v$)}$, and then perform a set of operations in batches over the elements of $L_v$, as follows. Regard $L_v$ as a queue. If $L_v[1]$, the linker node that represents the greatest suffix of $v$, has outdegree 0, then remove it from $L_v$. After that, check that each $L_v[i]$, with $i\in[1..|L_{v}|]$, has outdegree 1 and that all the outgoing edges are labeled with the same symbol. If some $L_v[i]$ has outdegree $> 1$ or two or more different symbols are seen in the outgoing edges of $L_{v}$, then return \texttt{false}. If all outgoing edges in $L_v$ have the same symbol $a$, then perform $L_v[i]=\texttt{forward} (L_{v}[i], a)$ for every $i$. Repeat the process until $L_v$ becomes empty, if that happens, then return \texttt{true}. Computing $LE$ is exactly the same process, but first we have to compute $v^{rc}$, the reverse complement of $v$. If we use $s\log s$ extra bits to store the permutation with the reverse complements of the solid nodes, then we can compute $v^{rc}$ in $\mathcal{O}(1)$.
\end{proof}

We also define the concept of right and left maximal paths.

\begin{lemma}
A right-maximal path (respectively, left-maximal) over \textit{rBOSS} is a path where all the nodes are RE (respectively, LE) except the rightmost (respectively, leftmost) node. A maximal-path is the concatenation  of a left-maximal and a right-maximal path.
\end{lemma}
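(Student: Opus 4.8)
The plan is to read the statement as a structural characterisation of the paths produced by the two extension operators, and to prove it by unwinding the definitions of RE and LE into deterministic successor and predecessor maps on the solid nodes. Concretely, I first observe that Theorem~\ref{lem:extensible} already supplies, for every RE solid node $v$, a \emph{unique} solid node $v'$ reached by following the out-edge of $v$ (when its outdegree is $1$) together with the out-edges of every $l\in L_v$; call this $v'=\mathrm{ext}_R(v)$. Thus right-extension is a partial function on the solid nodes, defined exactly on the RE ones. A right-maximal path is then, by the assembly procedure, the orbit $v_0,v_1,\dots,v_t$ with $v_i=\mathrm{ext}_R(v_{i-1})$, grown while the current endpoint is RE and halted at the first endpoint that is not RE (or, see below, whose successor would revisit an earlier node).

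The first half of the statement — all nodes RE except the rightmost — then follows by a short induction on the extension steps: for each $i<t$ the very existence of $v_{i+1}=\mathrm{ext}_R(v_i)$ witnesses that $v_i$ is RE, while $v_t$ is the rightmost node, the point at which extension stopped and about which no RE assumption is made. The left-maximal case is symmetric: using Theorem~\ref{theo:rev_comp} to locate $v^{rc}$ in $M$, left-extension of $v$ coincides with right-extension of $v^{rc}$ (by the reverse-complement symmetry of Lemma~\ref{lrc}, so that $v$ is LE iff $v^{rc}$ is RE). Hence $\mathrm{ext}_L$ is the conjugate of $\mathrm{ext}_R$ by reverse complement, the same induction applies, and only the leftmost node is exempt from being LE.

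For the second half I would fix any node $s$ lying on a path that is maximal in both directions. By the first half, the suffix of the path from $s$ rightward is a right-maximal path (every node RE except the right end where right-extension halts), and the prefix up to $s$ is a left-maximal path (every node LE except the left end). Because both $\mathrm{ext}_R$ and $\mathrm{ext}_L$ are deterministic, these two pieces are uniquely determined by $s$ and agree at $s$, so their concatenation at the shared node $s$ reconstitutes the whole maximal path; determinism also makes this decomposition independent of the chosen $s$. This is precisely the asserted concatenation of a left-maximal and a right-maximal path.

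The step I expect to be delicate is the termination/cycle case of the orbit. When the underlying region is circular or contains a long repeat, $\mathrm{ext}_R$ can return to a node already on the path; then every node on the closed walk is RE and the path is a cycle with $v_0=v_t$. I would handle this by adding ``the successor already lies on the path'' to the halting condition, so that $v_t$ — the repeated node, identified with $v_0$ — plays the role of the rightmost node and the clause ``all nodes RE except the rightmost'' stays literally true, the only exempt node being the coincident first/last one. The analogous treatment on the left, and checking that a both-ways-maximal cyclic path still decomposes as claimed with the seed $s$ as the shared endpoint, is the part that will need the most care.
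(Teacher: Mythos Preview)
The paper provides no proof of this statement at all: despite being labeled a \texttt{lemma}, it is intended purely as a \emph{definition}. Note the sentence immediately preceding it in the appendix: ``We also define the concept of right and left maximal paths.'' The preceding ``lemma'' defining RE/LE is of the same kind --- it too is a definition wearing a lemma environment. There is nothing to prove; the authors are simply fixing terminology for the assembly algorithm that follows.

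Your proposal, by contrast, reads the statement as a structural theorem and sets out to show that the orbits of a deterministic extension map $\mathrm{ext}_R$ (and its reverse-complement conjugate $\mathrm{ext}_L$) satisfy the stated property. That is not wrong as an exercise --- your induction and the cycle-handling caveat are reasonable --- but it is answering a question the paper does not ask. In particular, you invoke Theorem~\ref{lem:extensible} as supplying the unique successor $v'$, but that theorem is only a complexity bound for \emph{deciding} whether $v$ is RE; the uniqueness of $v'$ is already part of the \emph{definition} of RE in the preceding lemma. So even within your framing, the appeal to Theorem~\ref{lem:extensible} is misplaced.

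In short: the correct comparison is that the paper treats the statement as definitional and offers no argument, whereas you supply an unnecessary (though largely sound) proof that the algorithmic extension procedure realises the definition.
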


\subsection{Marking non-extensible nodes}

Computing whether a solid node $v$ is extensible  during a graph traversal can be expensive (Lemma~\ref{lem:extensible}), especially if the traversal is exhaustive. The amount of computation can be reduced, however, by computing beforehand which nodes are non-extensible and marking them in a bitmap $N$ of size $s$. Notice that only a small fraction of the nodes will be non-extensible, so $N$ is highly compressible.

There are four cases in which $v$ is non-extensible; (i) it has outdegree $>1$, (ii) there are two or more different outgoing symbols in $L_v$, (iii) the outgoing symbol in $v$ differs from the symbol in $L_v$, or (iv) the computation of the forward overlaps of $v$ yields two or more different irreductible overlaps. To detect non-extensible nodes, we use the topology of $T'$ instead of directly calling the function \texttt{foverlaps}.

We descend on $T'$ in DFS, and every time we reach a leaf $t$ that is the leftmost child of its parent, we append its edge symbols into a sequence $U$ and the leaf rank of $t$ to an array $I$, one append per edge. We also keep track of the different symbols appended into $U$ so far. If after consuming $t$ there are two or more different symbols in $U$, we scan $U$ from right to left until finding the first position $i$ such that $U[i]\neq U[i+1]$. Then, we mark as non-extensible all the solid nodes that contain any prefix of \texttt{llabel($I[i+1]$)} that in turns contains any prefix of \texttt{llabel($I[i]$)} of length $\ge m$. The rationale is that any solid node $v$ containing $I[i+1]$ will also contain $I[i]$ (we now this fact because the DFS order). The problem, however, is that the outgoing symbols of $I[i]$ and $I[i+1]$ differ. Thus, $v$ is a case (ii) non-extensible node. It can still happen that one of the prefixes of $I[i+1]$ is contained by a solid node $v'$, and if it does, then it might happen that $v'$ also contains a prefix of $I[i]$. In such case, $v'$ is a case (iv) non-extensible node, because the elements of $L_{v'}$ will lead to $I[i]$ and $I[i+1]$, which are known to differ in their outgoing edges. To be sure, we must go backwards in $I[i+1]$ marking the solid nodes that contain prefixes of $I[i+1]$, and we stop when $I[i+1]$ does not contain any prefix of $I[i]$ of size $\ge m$.

When a solid node $v$ is reached during the DFS traversal, we first have to check if it was already marked. If it is still unmarked, then we check if it has outdegree more than two ($v$ is a case i), or if it has outdegree 1, but its outgoing symbol differ from the outgoing symbols in $L_v$ ($v$ is case iii).

\subsection{Spelling maximal paths}

Once \textit{rBOSS} and the bitmap $N$ are built, the process of spelling maximal paths can be implemented as an stream algorithm, which is very space-efficient. For every non-extensible node $v$ compute the set $O_v=\texttt{overlaps($v$)}$. We start a forward traversal from each $o_{i} \in O_v$ and continue until reaching a non-extensible node. During the process, append the edge symbols to a vector $F$. After finishing, compute $o_{i}^{rc}$, start a forward traversal from it and continue until reaching the next non-extensible node. As with $o_i$, also append the outgoing edges to a vector $R$. The final string spelled by the maximal path will be $R^{rc} \cdot \texttt{label($o_i$)}\cdot F$. If in either of both traversals, forward or backward, an extensible solid node with outdegree 0 is reached, then call \texttt{nextcontained} and continue through its edges. 

\end{document}